\begin{document}
%
\title{BrePartition: Optimized High-Dimensional \emph{k}NN Search with Bregman Distances}
%
%
%
%

\author{Yang~Song, Yu~Gu, Rui~Zhang,~\IEEEmembership{Senior Member,~IEEE} and Ge~Yu,~\IEEEmembership{Member,~IEEE}
\IEEEcompsocitemizethanks{\IEEEcompsocthanksitem Y. Song, Y. Gu, and G. Yu are with the School of Computer Science and Engineering, Northeastern University, Shenyang, Liaoning 110819, China. E-mail: ysqyw1994@163.com, \{guyu, yuge\}@mail.neu.edu.cn.
\IEEEcompsocthanksitem R. Zhang is with the School of Computing and Information Systems, The University of Melbourne, Parkville VIC 3010, Australia. E-mail: rui.zhang@unimelb.edu.au.
\IEEEcompsocthanksitem Corresponding author: G. Yu}}
\IEEEtitleabstractindextext{%

\begin{abstract}
Bregman distances (also known as Bregman divergences) are widely used in machine learning, speech recognition and signal processing, and \emph{k}NN searches with Bregman distances have become increasingly important with the rapid advances of multimedia applications. Data in multimedia applications such as images and videos are commonly transformed into space of hundreds of dimensions. Such high-dimensional space has posed significant challenges for existing \emph{k}NN search algorithms with Bregman distances, which could only handle data of medium dimensionality (typically less than 100). This paper addresses the urgent problem of high-dimensional \emph{k}NN search with Bregman distances. We propose a novel partition-filter-refinement framework. Specifically, we propose an optimized dimensionality partitioning scheme to solve several non-trivial issues. First, an effective bound from each partitioned subspace to obtain exact \emph{k}NN results is derived. Second, we conduct an in-depth analysis of the optimized number of partitions and devise an effective strategy for partitioning. Third, we design an efficient integrated index structure for all the subspaces together to accelerate the search processing. Moreover, we extend our exact solution to an approximate version by a trade-off between the accuracy and efficiency. Experimental results on four real-world datasets and two synthetic datasets show the clear advantage of our method in comparison to state-of-the-art algorithms.
\end{abstract}
\begin{IEEEkeywords}
Bregman Distance, High-Dimensional, \emph{k}NN Search, Dimensionality Partitioning.
\end{IEEEkeywords}}
\maketitle
\IEEEdisplaynontitleabstractindextext
\IEEEpeerreviewmaketitle
\IEEEraisesectionheading{\section{Introduction}}
\IEEEPARstart{B}{regman} distances (also called Bregman divergences), as a generalization of a wide range of non-metric distance functions (e.g., Squared Mahalanobis Distance and Itakura-Saito Distance), play an important role in many multimedia applications such as image and video analysis and retrieval, speech recognition and time series analysis~\cite{DBLP:conf/iccv/GoldbergerGG03, DBLP:conf/compgeom/NielsenN06, DBLP:conf/aaai/LongZY07}. This is because metric measurements (such as Euclidian distance) satisfy the basic properties in metric space, such as non-negativity, symmetry and triangular inequality. Although it is empirically proved successful, the metric measurements represented by Euclidian distance are actually inconsistent with human's perception of similarity~\cite{DBLP:conf/aaai/MuY10, DBLP:conf/nips/LaubMMW06}. Examples from~\cite{DBLP:journals/pvldb/ZhangOPT09} and~\cite{DBLP:conf/aaai/MuY10} illustrate that the distance measurement is not metric when comparing images. As can be seen in Fig.~\ref{Examples}(a), the moon and the apple are similar in shape, the pentagram and the apple are similar in color, but there is no similarity between the moon and the pentagram. In this case, our perception of similarity violates the notion of triangular inequality and illustrates that human beings are often comfortable when deploying or using non-metric dissimilarity measurements instead of metric ones especially on complex data types~\cite{DBLP:journals/pvldb/ZhangOPT09, DBLP:conf/iccv/PuzichaRTB99}. Likewise, as shown in Fig~\ref{Examples}(b)~\cite{DBLP:conf/aaai/MuY10}, both the ¡°man¡± and the ¡°horse¡± are perceptually similar to their composition, but the two obviously differ from each other. Therefore, it is not appropriate to employ Euclidian distance as the distance measurement in many practical scenarios.

Since Bregman distances have the capability of exploring the implicit correlations of data features~\cite{DBLP:journals/tkde/LiPZZWLY17}, they have been widely used in recent decades in a variety of applications, including image retrieval, image classification and sound processing~\cite{DBLP:journals/tmm/RasiwasiaMV07, DBLP:conf/iccv/PuzichaRTB99, gray1980distortion}. Over the last several years, they are also used in many practical applications. For example, they are employed to measure the closeness between Hermitian Positive-Definite (HPD) matrices to realize target detection in a clutter~\cite{DBLP:journals/dsp/HuaCWQC18}. They are also used as the similarity measurements of the registration functional to combine various types of image characterization as well as spatial and statistical information in image registration~\cite{DBLP:journals/pr/FerreiraRB18} and apply multi-region information to express the global information in image segmentation~\cite{DBLP:journals/mta/ChengSTL19}. In addition, they are applied in graph embedding, matrix factorization and tensor factorization in the field of social network analysis~\cite{DBLP:journals/corr/abs-1908-02573}. Among the operations that employ Bregman distances, \emph{k}NN queries are demanded as a core primitive or a fundamental step in the aforementioned applications~\cite{DBLP:conf/nips/WeinbergerBS05, DBLP:conf/icip/RamaswamyR08, DBLP:journals/jip/DongIX16}.
\begin{figure}
\centering
\hspace{-10pt}
\subfigure[Example 1]{
\includegraphics[height=1in]{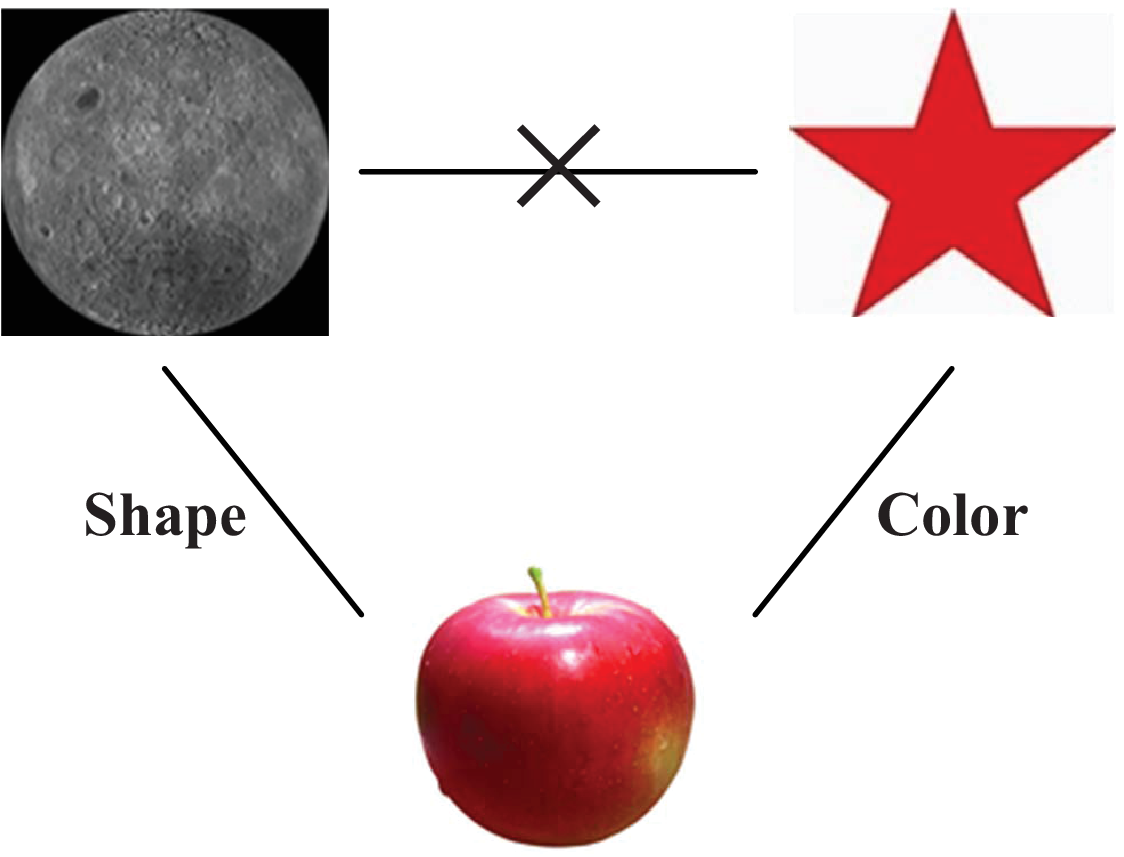}}
\hspace{-10pt}
\subfigure[Example 2]{
\includegraphics[height=1in]{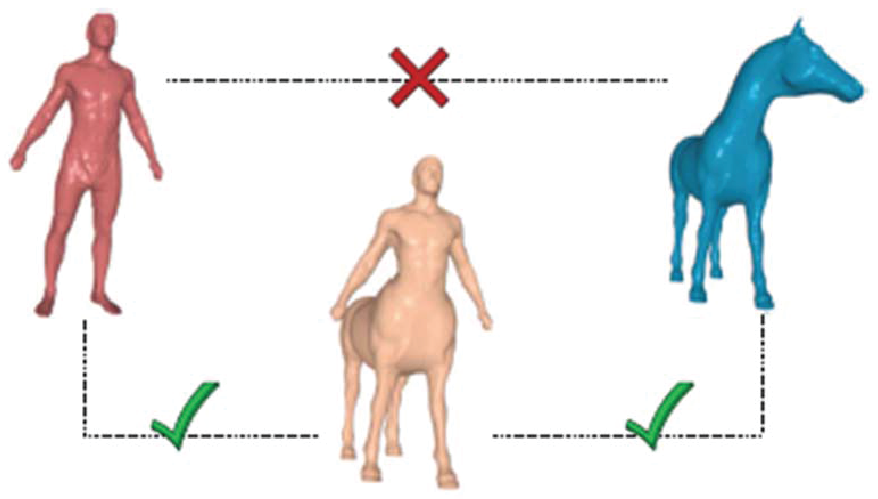}}
\caption{Examples}
\label{Examples}
\end{figure}

In addition, data in multimedia applications such as images and videos are commonly transformed into space of hundreds of dimensions, such as the commonly used datasets (Audio, Fonts, Deep, SIFT, etc) illustrated in our experimental evaluation. Existing approaches~\cite{DBLP:conf/icml/Cayton08, DBLP:journals/pvldb/ZhangOPT09} for \emph{k}NN searches with Bregman distances focus on designing index structures for Bregman distances, but these index structures perform poorly in high-dimensional space due to either large overlaps between clusters or intensive computation.

Aiming at these situations, this paper addresses the urgent problem of high-dimensional \emph{k}NN searches with Bregman distances. In this paper, we propose a partition-filter-refinement framework. We first partition a high-dimensional space into many low-dimensional subspaces. Then, range queries in all the subspaces are performed to generate candidates. Finally, the exact \emph{k}NN results are evaluated by refinement from the candidates. However, realizing such a framework requires solving the following three challenges:
\begin{itemize}
\item \textbf{Bound:} In metric space, bounds are usually derived based on triangular inequality, but in non-metric space, triangular inequality does not hold. Therefore, it is challenging to derive an efficient bound for Bregman distances which are not metric.
\item \textbf{Partition:} How to partition the dimensions to get the best efficiency is a challenge. We need to work out both how many partitions and at which dimensions should we partition.
\item \textbf{Index:} It is challenging to design an I/O efficient index that handles all the dimension partitions in a unified manner. Specifically, by effectively organizing the data points on disks, the index is desired to adapt to our proposed partition strategy and facilitate the data's reusability across partitions.
\end{itemize}

To address the above challenges, and make the following contributions in this paper:

\begin{itemize}
\item We derive the upper bounds between the given query point and an arbitrary data point in each subspace mathematically based on Cauchy inequality and the proper upper bounds are selected as the searching bounds from these subspaces. The final candidate set is the union of the candidate subsets of all the subspaces. We theoretically prove that the \emph{k}NN candidate set obtained following our bound contains the \emph{k}NN results.
\item For dimensionality partitioning, we observe a trade-off between the number of partitions and the search time. Therefore, we derive the algorithm's time complexity and the optimized number of partitions in theory. Furthermore, a strategy named Pearson Correlation Coefficient-based Partition (PCCP) is proposed to reduce the size of the candidate set by partitioning highly-correlated dimensions into different subspaces.
\item After dimensionality partitioning, we employ BB-trees in our partitioned low-dimensional subspaces and design an integrated and disk-resident index structure, named BB-forest. BB-trees can handle low-dimensional data efficiently, so they work well in our framework since the data has been partitioned into low-dimensional subspaces. In addition, the data points are well-organized on the disks based on our proposed PCCP to improve data's reusability in all the subspaces, so that the I/O cost can be reduced.
\item In order to improve the search efficiency while ensuring comparable accuracy with the probability guarantee, we make a trade-off between the efficiency and the accuracy and propose a solution to approximate $k$NN search through the data distribution.
\item Extensive experimental evaluations demonstrate the high efficiency of our approach. Our algorithm named BrePartition can clearly outperform state-of-the-art algorithms in running time and I/O cost.
\end{itemize}

\par The rest of the paper is structured as follows. Section~\ref{Related works} presents the related works. The preliminaries and overview are discussed in Section~\ref{Preliminaries}. We present the derivation of the upper bound in Section~\ref{Derivation of Bound} and the dimensionality partitioning scheme in Section~\ref{Dimensionality Partition}. The index structure, BB-Forest, is described in Section~\ref{BB-Forest}. We present the overall framework in Section~\ref{Putting Everything Together}. The extended solution to approximate \textit{k}NN search is presented in Section~\ref{Extension}. Experimental results are disscussed in Section~\ref{Experiments}. Finally, we conclude our work in Section~\ref{Conclusion}.
\vspace{-10pt}
\section{Related Works}\label{Related works}
\emph{k}NN search is a fundamental problem in many application domains. Here we review existing works on the \emph{k}NN search problem in both metric and non-metric spaces.
\vspace{-10pt}
\subsection{Metric Similarity Search}
The metric similarity search problem is a classic topic and a plethora of methods exist for speeding up the nearest neighbor retrieval. Existing methods contain tree-based data structures including KD-tree~\cite{DBLP:journals/cacm/Bentley75}, R-tree~\cite{DBLP:conf/sigmod/Guttman84}, B+-tree variations~\cite{DBLP:journals/tods/JagadishOTYZ05, DBLP:journals/pvldb/AroraSK018, DBLP:conf/icde/ZhangOT04} and transformation-based methods including Product Quantization (PQ)-based methods~\cite{DBLP:journals/pami/HeoLY19, DBLP:conf/cikm/LiuCC17}, Locality Sensitive Hashing (LSH) family~\cite{DBLP:journals/pvldb/LiuCHLS14, DBLP:conf/sigmod/GaoJLO14, DBLP:journals/pvldb/ChenGZJYY17, DBLP:conf/icde/LiuWZWQ19} and some other similarity search methods based on variant data embedding or dimensionality reduction techniques~\cite{DBLP:conf/cvpr/HwangHA12, DBLP:journals/pvldb/SunWQZL14, DBLP:journals/tkde/GuGSZY18}. These methods can't be utilized in non-metric space where the metric postulates, such as symmetry and triangle inequality, are not followed.
\vspace{-10pt}
\subsection{Bregman Distance-based Similarity Search}
Due to the widespread use of Bregman distances in multimedia applications, a growing body of work is tailored for the \emph{k}NN search with bregman distances. The prime technique is Bregman voronoi diagrams derived by Nielsen et al.~\cite{DBLP:conf/soda/NielsenBN07}. Soon after that, Bregman Ball tree (BB-tree) is introduced by Cayton~\cite{DBLP:conf/icml/Cayton08}. BB-trees are built by a hierarchical space decomposition via \emph{k}-means, sharing the similar flavor with KD-tree. In a BB-tree, the clusters appear in terms of Bregman balls and the filtering condition in the dual space on the Bregman distance from a query to a Bregman ball is computed for pruning out portions of the search space. Nielsen  et al.~\cite{DBLP:conf/icml/CovielloMCL13} extend the BB-tree to symmetrized Bregman distances. Cayton~\cite{DBLP:conf/nips/Cayton09} explores an algorithm to solve the range query based on BB-tree. Nevertheless, facing higher dimensions, considerable overlap between clusters will be incurred, and too many nodes have to be traversed during a \emph{k}NN search. Therefore the efficiency of BB-tree is dramatically degraded, sometimes even worse than the linear search. Zhang et al.~\cite{DBLP:journals/pvldb/ZhangOPT09} devise a novel solution to handle the class of Bregman distances by expanding and mapping data points in the original space to a new extended space. It employs typical index structures, R-tree and VA-file, to solve exact similarity search problems. But it's also inefficient for more than 100 dimensions, because too many returned candidates in a filter-refinement model lead to intensive computation overhead of Bregman distances.

Towards more efficient search processing, there have been increasing attentions focusing on the approximate search methods for Bregman distances~\cite{DBLP:conf/soda/AckermannB09, DBLP:conf/icml/Cayton08, DBLP:journals/ijcga/AbdullahMV13, DBLP:conf/icml/CovielloMCL13, DBLP:conf/sac/FerreiraRB17, DBLP:conf/aaai/MuY10}. These approximate methods achieve the efficiency promotions with the price of losing accuracies. For example, the state-of-the-art approximate solution~\cite{DBLP:conf/icml/CovielloMCL13}, which is designed for the high-dimensional space, exploits the data¡¯s distribution and employs a variational approximation to estimate the explored nodes during backtracking in the BB-tree. Nevertheless, all these methods can't provide the precision guarantees, while some of them cannot be applied to the high-dimensional space~\cite{DBLP:conf/icml/Cayton08, DBLP:conf/soda/AckermannB09, DBLP:journals/ijcga/AbdullahMV13, DBLP:conf/sac/FerreiraRB17}.
\vspace{-10pt}
\subsection{Other Non-Metric Similarity Search Methods}
There also exist many works in the context of non-metric similarity search without the limit to Bregman distances. Space-embedding techniques~\cite{DBLP:conf/sigmod/FaloutsosL95, DBLP:journals/kais/WangWLSSZ00, DBLP:conf/cvpr/AthitsosASK04, DBLP:conf/sigmod/AthitsosHKS05} embed non-metric spaces into an Euclidean one where two points that are close to each other in the original space are more likely close to each other in the new space. Distance-mapping techniques transform the non-metric distance by modifying the distance function itself while preserving the original distance orderings. Skopal~\cite{DBLP:journals/tods/Skopal07} develops TriGen algorithm to derive an efficient mapping function among concave functions by using the distance distribution of the database. NM-tree~\cite{DBLP:conf/dexa/SkopalL08} combines M-tree and TriGen algorithm for the non-metric search. Liu et al,~\cite{DBLP:conf/mm/LiuH09} propose a simulated-annealing-based technique to derive optimized transform functions while preserving the original similarity orderings. Chen et al.~\cite{DBLP:journals/tkde/ChenL08} employ the constant shifting embedding with a suitable clustering of the dataset for a more effective lower-bounds. Recently, a representative technique based on Distance-Based Hashing (DBH)~\cite{DBLP:conf/sdm/JangyodsukPA15} is presented and a general similarity indexing methods for non-metric distance measurements is designed by optimizing hashing functions. In addition, Dyndex~\cite{DBLP:conf/mm/GohLC02}, as the most impressive technique based on classification performs classification of the query point to answer similarity search by categorizing points into classes. These methods degrade dramatically in performance when dealing with high-dimensional issues. There exists an approximate solution called Navigable Small World graph with controllable Hierarchy (HNSW)~\cite{DBLP:journals/corr/MalkovY16}, which can be extended to non-metric space. However, it is not a disk-resient solution, while we mainly focus on disk-resident solutions in this paper.

We summarize the properties of representative non-metric search methods in Table~\ref{Summary}. In Table~\ref{Summary}, NM means that the method adopts the distance functions of non-metric space instead of metric space, BDS means that the method is designed specifically for Bregman distances, and HD means that the method works well in high-dimensional space (more than 100 dimensions). Our proposed solution BrePartition is the first algorithm that possesses all the four desired properties compared to existing algorithms.

\begin{table}
\centering
\caption{Non-metric search methods}
\label{Summary}
\begin{tabular}{lllll}
\hline\noalign{\smallskip}
Name & NM & BDS & HD & Exact \\
\noalign{\smallskip}\hline\noalign{\smallskip}
BrePartition (Our solution) & $\surd$ & $\surd$ & $\surd$ & $\surd$ \\
Zhang et al.~\cite{DBLP:journals/pvldb/ZhangOPT09} & $\surd$ & $\surd$ &  & $\surd$ \\
BB-tree~\cite{DBLP:conf/icml/Cayton08, DBLP:conf/nips/Cayton09} & $\surd$ & $\surd$ &  & $\surd$ \\
Bregman voronoi diagram~\cite{DBLP:conf/soda/NielsenBN07} & $\surd$ & $\surd$ &  & $\surd$ \\
BB-tree variants~\cite{DBLP:conf/icml/CovielloMCL13} & $\surd$ & $\surd$ &  & \\
Ackermann et al.~\cite{DBLP:conf/soda/AckermannB09} & $\surd$ & $\surd$ &  &  \\
Abdullah et al.~\cite{DBLP:journals/ijcga/AbdullahMV13} & $\surd$ & $\surd$ &  &  \\
Coviello et al.~\cite{DBLP:conf/icml/CovielloMCL13} & $\surd$ & $\surd$ & $\surd$ &  \\
Ferreira et al.~\cite{DBLP:conf/sac/FerreiraRB17} & $\surd$ & $\surd$ &  &  \\
Non-metric LSH~\cite{DBLP:conf/aaai/MuY10} & $\surd$ & $\surd$ & $\surd$ & \\
FastMap~\cite{DBLP:conf/sigmod/FaloutsosL95} & $\surd$ &  &  & $\surd$ \\
Wang et al.~\cite{DBLP:journals/kais/WangWLSSZ00} & $\surd$ &  &  & $\surd$ \\
Boostmap~\cite{DBLP:conf/cvpr/AthitsosASK04} & $\surd$ &  &  &  \\
Athitsos et al.~\cite{DBLP:conf/sigmod/AthitsosHKS05} & $\surd$ &  & $\surd$ &  \\
TriGen~\cite{DBLP:journals/tods/Skopal07} & $\surd$ &  &  & $\surd$ \\
NM-tree~\cite{DBLP:conf/dexa/SkopalL08} & $\surd$ &  &  &  \\
Liu et al~\cite{DBLP:conf/mm/LiuH09} & $\surd$ &  &  &  \\
LCE~\cite{DBLP:journals/tkde/ChenL08} & $\surd$ &  &  &  \\
DBH~\cite{DBLP:conf/sdm/JangyodsukPA15} & $\surd$ &  & $\surd$ & \\
Dyndex~\cite{DBLP:conf/mm/GohLC02} & $\surd$ &  & $\surd$ & \\
HNSW~\cite{DBLP:journals/corr/MalkovY16} & $\surd$ &  & $\surd$ & \\
\noalign{\smallskip}\hline
\end{tabular}
\end{table}
\vspace{-10pt}
\section{Preliminaries and Overview}\label{Preliminaries}
\subsection{Bregman Distance}
Given a $d$-dimensional vector space $S$, a query $y=(y_1,y_2,...,y_d)$ and an arbitrary data point $x=(x_1,x_2,...,x_d)$, the Bregman distance between $x$ and $y$ is defined as ${D}_{f}\left(x,y\right)=f\left(x\right)-f\left(y\right)-\left<\nabla f\left(y\right),x-y\right>$, where $f(\cdot)$ is a convex function mapping points in $S$ to real numbers, $\nabla f\left(y\right)$ is the gradient of $f(\cdot)$ at $y$, and $\left<\cdot,\cdot\right>$ denotes the dot product between two vectors. When different convex functions are employed, Bregman distances define several well-known distance functions. Some representatives contain:
\begin{itemize}
\item\textbf{Squared Mahalanobis Distance:} The given $f\left(x\right)=\frac{1}{2}{x}^{\rm T}Qx$ generates ${D}_{f}\left(x,y\right)=\frac{1}{2}{\left(x-y\right) }^{T}Q\left(x-y\right)$ which can be considered as a generalization of the above squared Euclidean distance.
\item\textbf{Itakura-Saito Distance (ISD):} When the given $f\left(x\right)=-\sum{\log{{x}_{i}}}$, the distance is IS distance which is denoted by ${D}_{f}\left(x,y\right)=\sum{\left(\frac{{x}_{i}}{{y}_{i}}-\log{\frac{{x}_{i}}{{y}_{i}}-1}\right)}$.
\item\textbf{Exponential Distance (ED):} When the given $f(x)=e^{x}$, the Bregman distance is represented as ${D}_{f}(x,y)=e^{x}-(x-y+1)e^{y}$. In this paper, we name it \emph{exponential distance}.
\end{itemize}

In addition, our method can be applied to most measures belonging to Bregman distances, such as Shannon entropy, Burg entropy, $l_{p}$-quasi-norm and $l_{p}$-norm, except KL-divergence, since it's not cumulative after the dimensionality partitioning.
\vspace{-10pt}
\subsection{Overview}
Our method consists of the precomputation and the search processing. In the precomputation, we first partition the dimensions (described in Section~\ref{Dimensionality Partition}). Second, we construct BB-trees in the partitioned subspaces and integrate them to form a BB-forest (described in Section~\ref{BB-Forest}). Third, we transform the data points into tuples for computing the searching bound (described in Section~\ref{Derivation of Bound}). During the search processing, we first transform the query point into a triple and compute the bound used for the range query (described in Section~\ref{Derivation of Bound}). Second, we perform the range query for the candidates. Finally, the \emph{k}NN results are evaluated from these candidates. The whole process is illustrated in Fig~\ref{Visio-Process}.
\begin{figure}\centering
\includegraphics[width=0.48\textwidth]{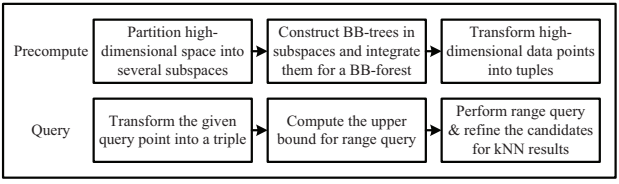}
\caption{Overview} \label{Visio-Process}
\end{figure}
We summarize the frequently-used symbols in Table~\ref{Notations}.
\begin{table}\centering
\caption{Frequently used symbols}
\label{Notations}
\begin{tabular}{ll}
\hline\noalign{\smallskip}
Symbol & Explanation \\
\noalign{\smallskip}\hline\noalign{\smallskip}
$S$ & dataset \\
$n$ & number of data points \\
$d$ & dimensionality of each data point \\
$k$ & number of returned points that users require\\
$x$ & data point \\
$y$ & query point \\
$P(x)$ & transformed data point \\
$Q(y)$ & transformed query point \\
$D_{f}(x,y)$ & the Bregman distance between $x$ and $y$ \\
$UB(x,y)$ & the upper bound of Bregman distance between $x$ and $y$ \\
$M$ & number of partitions \\
\noalign{\smallskip}\hline
\end{tabular}
\end{table}

\section{Derivation of Bound}\label{Derivation of Bound}
For \emph{k}NN queries, it is crucial to avoid the exhaustive search by deriving an effective bound as a pruning condition. We exploit the property of Bregman distances and an upper bound is derived from Cauchy inequality.

Given a data set $D$, suppose ${x}=(x_{1},...,x_{d})^{\rm T}$ and ${y}=(y_{1},...,y_{d})^{\rm T}$ are two $d$-dimensional vectors in $D$. After dimensionality partitioning, $x$ is partitioned into $M$ disjoint subspaces represented by $M$ subvectors. These $M$ subvectors are denoted by:
\begin{align*}
{x}_{i\cdot}&={\left({x}_{\left\lceil\frac{d}{M}\right\rceil\times\left(i-1\right)+1},...,{x}_{\left\lceil\frac{d}{M}\right\rceil\times i}\right)}^{\rm T}\\
&=(x_{i1},...,x_{i\left\lceil\frac{d}{M}\right\rceil})^{\rm T},
\end{align*}
where $1\le i\le M$. Vector $y$ is partitioned in the same manner, while each part is denoted by ${y}_{i\cdot}$ $\left(1\le i\le M\right)$. By Cauchy inequality, we can prove Theorem~\ref{Theorem 1} below, which can be used to derive the upper bound between arbitrary $x_{i\cdot}$ and $y_{i\cdot}$ ($1\le i\le M$) in the same subspace.

\newtheorem{theorem1}{Theorem}
\begin{theorem1}\label{Theorem 1}
The upper bound between $x_{i\cdot}$ and $y_{i\cdot}$ ($1\le i\le M$) is derived:
$${D}_{f}\left(x_{i\cdot},y_{i\cdot}\right)\le {\alpha}_{x}^{(i)}+{\alpha}_{y}^{(i)}+\beta_{yy}^{(i)}+\sqrt{\gamma_{x}^{(i)}\times \delta_{y}^{(i)}}.$$
For simplicity, ${\alpha}_{x}^{(i)}$, ${\alpha}_{y}^{(i)}$, $\beta_{yy}^{(i)}$, $\gamma_{x}^{(i)}$ and $\delta_{y}^{(i)}$ are used to mark these formulas:
$${\alpha}_{x}^{(i)}=\sum_{j=1}^{\left\lceil d/M \right\rceil}{f\left({x}_{ij}\right)}, {\alpha}_{y}^{(i)}=-\sum_{j=1}^{\left\lceil d/M \right\rceil}{f\left({y}_{ij}\right)},$$
$$\beta_{xy}^{(i)}=-\sum_{j=1}^{\left\lceil d/M \right\rceil}({x}_{ij}\times \frac{\partial f(y)}{\partial y_{ij}}), \beta_{yy}^{(i)}=\sum_{j=1}^{\left\lceil d/M \right\rceil}(y_{ij}\times \frac{\partial f(y)}{\partial y_{ij}}),$$
$$\gamma_{x}^{(i)}=\sum_{j=1}^{\left\lceil d/M \right\rceil} x_{ij}^{2}, \delta_{y}^{(i)}=\sum_{j=1}^{\left\lceil d/M \right\rceil}{\frac{\partial f(y)}{\partial y_{ij}}}^{2}.$$
\end{theorem1}
\begin{proof}
Please see Section 1 in the supplementary file.
\end{proof}

Based on Theorem~\ref{Theorem 1}, we can derive the upper bound between two arbitrary data points in each subspace. Furthermore, we can obtain the upper bound between two data points in the original high-dimensional space by summing up the $M$ upper bounds from all the subspaces. Theorem~\ref{Theorem 2} proves it.

\begin{theorem1}\label{Theorem 2}
The Bregman distance between $x$ and $y$ is bounded by the sum of each upper bound from its corresponding subspace. Formally, $D_{f}({x},{y})\le {{UB}({x},{y})}$, where ${{UB}({x},{y})}=\sum_{i=1}^{M}{{UB}\left({x}_{i\cdot},{y}_{i\cdot}\right)}$.
\end{theorem1}
\begin{proof}
Please see Section 2 in the supplementary file.
\end{proof}

Therefore, when dealing with a \emph{k}NN search, we first compute the upper bounds between the given query points and every data point in the dataset. And then the $k^{th}$ smallest upper bound is selected and its components are selected as the searching bounds to perform range queries in their corresponding subspaces as the filter processing. The union of the searching result sets of all the subspaces is the final candidate set for the refinement processing. Theorem~\ref{Theorem 3} proves that all the \emph{k}NN points are in the final candidate set.

\begin{theorem1}\label{Theorem 3}
The candidate set of each subspace is denoted by $C_{i}$ ($1\leq i\leq M$), and the final candidate set:
$$C=C_{1}\cup C_{2}\cup\cdot\cdot\cdot\cup C_{M}.$$
The $k$NN points of the given query point $y$ exist in $C$.
\end{theorem1}
\begin{proof}
Please see Section 3 in the supplementary file.
\end{proof}

Based on the upper bound derived in Theorem~\ref{Theorem 1}, two components ${\alpha}_{x}$ and $\gamma_{x}$ of data points in all the subspaces can be computed offline. It can be considered as the precomputation that each partitioned multi-dimensional data point in each subspaces is transformed into a two-dimensional tuple denoted by $P(x)=({\alpha}_{x},\gamma_{x})$ (See Fig.~\ref{Visio-Transformation} for an illustration). Once the query point is given, we only need to compute the values of ${\alpha}_{y}$, $\beta_{yy}$ and $\delta_{y}$, which can be considered as a triple denoted by $Q(y)=({\alpha}_{y},\beta_{yy},\delta_{y})$, to obtain the upper bounds while the computation overhead is very small. Relying on the precomputation, the search processing will be dramatically accelerated.

\begin{figure}\centering
\includegraphics[width=0.3\textwidth]{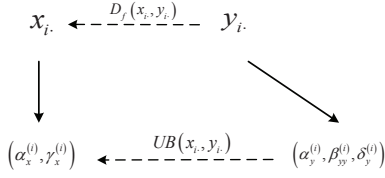}
\caption{Transformation} \label{Visio-Transformation}
\end{figure}

Algorithm~\ref{UBCompute} describes the process of computing the upper bound between two arbitrary data points by their corresponding transformed vectors. Algorithm~\ref{PTransform} and Algorithm~\ref{QTransform} describe the transformations of the partitioned data point and the partitioned query point, respectively. Algorithm~\ref{QBDetermine} describes the process of determining the searching bound from each subspace.

\begin{algorithm}[htb]
  \small\caption{\textsf{UBCompute} ($P(x)$,$Q(y)$)}
  \label{UBCompute}
  \begin{algorithmic}[1]
    \Require
    the transformed vectors of $x$ and $y$, $P(x)=({\alpha}_{x},\gamma_{x})$ and $Q(y)=({\alpha}_{y},\beta_{yy},\delta_{y})$.
    \Ensure
    the upper bound of $x$ and $y$, $ub$.
    \State $ub={\alpha}_{x}+{\alpha}_{y}+\beta_{yy}+\sqrt{\gamma_{x}\times \delta_{y}}$;\\
    \Return $ub$;
    \end{algorithmic}
\end{algorithm}

\begin{algorithm}[htb]
  \small\caption{\textsf{PTransform} ($X$)}
  \label{PTransform}
  \begin{algorithmic}[1]
    \Require
    the subvectors' set of a partitioned data point $X=\{x_{1},x_{2},...,x_{M}\}$.
    \Ensure
    the transformed tuples' set of a partitioned data point $P=\{P(x_{1}),P(x_{2}),...,P(x_{M})\}$.
      \For{$i=1$ to $M$}
        \State Compute $\alpha_{x}^{(i)}$ and $\gamma_{x}^{(i)}$;
        \State $P(x_{1})=(\alpha_{x}^{(i)},\gamma_{x}^{(i)})$;
      \EndFor
      \State $P=\{P(x_{1}),P(x_{2}),...,P(x_{M})\}$;\\
      \Return $P$;
    \end{algorithmic}
\end{algorithm}

\begin{algorithm}[htb]
  \small\caption{\textsf{QTransform} ($Y$)}
  \label{QTransform}
  \begin{algorithmic}[1]
    \Require
    the subvectors' set of the partitioned query point $Y=\{y_{1},y_{2},...,y_{M}\}$.
    \Ensure
    the transformed tuples' set of the partitioned query point $Q=\{Q(y_{1}),Q(y_{2}),...,Q(y_{M})\}$.
      \For{$i=1$ to $M$}
        \State Compute $\alpha_{y}^{(i)}$, $\beta_{yy}^{(i)}$ and $\delta_{y}^{(i)}$;
        \State $Q(x_{i})=(\alpha_{y}^{(i)},\beta_{yy}^{(i)},\delta_{y}^{(i)})$;
      \EndFor
      \State $Q=\{Q(y_{1}),Q(y_{2}),...,Q(y_{M})\}$;\\
      \Return $Q$;
    \end{algorithmic}
\end{algorithm}

\begin{algorithm}[htb]
  \small\caption{\textsf{QBDetermine} ($S_{t}$,$Q$)}
  \label{QBDetermine}
  \begin{algorithmic}[1]
    \Require
    the transformed dataset $S_{t}=\{P_{1},P_{2},...,P_{n}\}$, the transformed query point $Q$.
    \Ensure
    the set containing $M$ subspaces' searching bounds $QB$.
      \State $UB=\emptyset$, $QB=\emptyset$;
      \For{$i=1$ to $n$}
        \For{$j=1$ to $M$}
          \State $ub_{ij}=\textsf{UBCompute} (P_{ij},Q_{j})$; // Algorithm~\ref{UBCompute}
          \State $ub_{i}+=ub_{ij}$;
          \State $QB_{i}=QB_{i}\cup \{ub_{ij}\}$;
      \EndFor
      \State $UB=UB\cup {ub_{i}}$;
    \EndFor
    \State Sort $UB$ and find the $k^{th}$ smallest $ub_{t}$ ($1\leq t\leq n$);
    \State $QB=QB_{t}=\{ub_{t1},ub_{t2},...,ub_{tM}\}$.\\
    \Return $QB$;
    \end{algorithmic}
\end{algorithm}
\vspace{-15pt}
\section{Dimensionality Partitioning}\label{Dimensionality Partition}
According to Cauchy inequality, we can prove:
$$\sum_{i=1}^{M_{1}}{\sqrt{\gamma_{x}^{(i)}\times \delta_{y}^{(i)}}}>\sum_{i=1}^{M_{2}}{\sqrt{\gamma_{x}^{(i)}\times \delta_{y}^{(i)}}},$$
when $M_{1}<M_{2}$. It indicates that more partitions bring tighter bounds. Furthermore, we can prove that there is an exponential relationship between the upper bound and the number of partitions. However, more partitions lead to more online computation overhead. Therefore, it's an unresolved issue to determine the number of partitions. Besides, how to partition these dimensions may also affect the efficiency. In this section, the optimized number of partitions which contributes to the efficiency is derived theoretically and a heuristic strategy is devised for the purpose of reducing the candidate set by partitioning.
\subsection{Number of Partitions}\label{Number of Partitions}
More partitions lead to a tighter bound, which indicates a smaller candidate set intuitively. But more partitions also incur more computation overhead when calculating the upper bounds from more subspaces at the same time. It's a trade-off between the number of partitions and the efficiency. In this part, we will theoretically analyze the online processing to derive the optimized number of partitions for the optimized efficiency. The online processing mainly includes two parts, one is to compute the upper bounds between the given query point and an arbitrary data point in the dataset and sort these upper bounds to determine the searching bound. The other is to obtain the candidate points by performing range queries and refine the candidate points for the results. Below we will derive the time complexities of the online \emph{k}NN search.

As a prerequisite, we transform the given query into a triple:
$$Q(y)=(-\sum f(y),\sum{\frac{\partial f(y)}{\partial y}},\sum{y\times\frac{\partial f(y)}{\partial y}}).$$
When it comes to computing upper bounds, these triples in the $M$ partitions can be computed in $O(d)$ time. Since we have already transformed the multi-dimensional data points in each subspace into tuples, we can compute the upper bounds in $O(Mn)$ time in all $M$ subspaces. The time complexity of summing up all $n$ points' upper bounds is also $O(Mn)$ and the time complexity of sorting them to find the $k^{th}$ smallest one is $O(n\log k)$. Therefore, the whole time complexity of transforming the query point, computing the upper bounds, summing up them and sorting them to find the $k^{th}$ smallest one is $O(d+Mn+n\log k)$.

In the filter-refinement processing, we specify a parameter $\lambda$ $(0<\lambda <1)$ to describe the pruning effect of the searching bound for the complexity analyses. We suppose that data points are stored in BB-trees and each leaf node is full. And we set the capacity of each leaf node in a BB-tree as $C$ and the number of leaf nodes is $n/C$. Thus the number of accessed leaf nodes during the range queries can be estimated as $\lambda n/C$. According to the proposed method in~\cite{DBLP:conf/nips/Cayton09}, the secant method is employed to determine whether a cluster and the searching range intersect, or whether one contains another when performing the range queries. For the time consumption of the determination is negligible, the time complexity of searching in the $M$ BB-trees is equivalent to that of searching in a binary tree and the time complexity is $O(M\times\frac{\lambda n}{C}\log{\frac{n}{C}})$. In BB-trees, the capacity of each leaf node increases with the size of the corresponding dataset since we have to restrict the height of the tree, and the value of $\frac{n}{C}$ can be considered as a constant. Consequently, the time complexity of searching in the BB-trees is ignored.

Once we have identified the clusters that intersect the given searching range, all the data points in these clusters will be loaded from the disk into the memory for processing as the candidate points which will be refined to obtain the \emph{k}NN points. Thus we should estimate the size of the candidate set primarily. There is an exponential relationship between the upper bound $UB$ and the number of partitions $M$ and we describe it as $UB=A\alpha^{M}$, where $0<\alpha<1$. In addition, we assume the parameter $\lambda$ which describes the pruning effect is proportional to the upper bound $UB$ and is represented as $\lambda=\beta UB$. Based on these, the number of the candidate points is $\beta A\alpha^{M}n$. Actually, the final candidate set is the union of all the subspaces' candidate subsets. Here we directly consider the candidate set obtained by searching in the original space with the summing searching bound as the final candidate set, since it is a very good approximation of the union of all the subspaces' candidate subsets which is verified by experiments. Therefore, the refinement process takes $O(\beta A\alpha^{M}nd+\beta A\alpha^{M}n\log k)$. The first item represents the time complexity of computing the Bregman distances between each candidate point and the query, while the second represents the time complexity of evaluating the \emph{k}NN points.

Therefore, the online time complexity is
$$O(d+Mn+n\log k+\beta A\alpha^{M}nd+\beta A\alpha^{M}n\log k)$$
in total and we target at minimizing the total time cost.

\begin{theorem1}\label{Theorem 4}
For any user-specified $k$ $(0<k\le n)$, the time complexity can be minimized by setting the number of partitions $$M=\log_{\alpha}{\frac{2n}{-\mu \ln{\alpha}(d+\log k)}},$$
where $\mu=\beta An$.
\end{theorem1}
\begin{proof}
Please see Section 4 in the supplementary file.
\end{proof}
Especially, since the number of partitions requires to be determined offline, we set the value of $k$ to $1$ which will not impact too much on the number of partitions since $k$ is negligible compared to the value of $n$. Moreover, the result computed by Theorem~\ref{Theorem 4} may not be an integer, hence we compute the time costs in both cases of rounding up and down and choose the best value of $M$. $A$ and $\alpha$ can be determined by fitting the function $UB=A\alpha^{M}$ through two arbitrary points' $UB$ and the corresponding $M$, while the points are randomly selected from the dataset. And $\beta$ can be determined by computing the proportion of the points within each sample's $UB$ to $n$.

As $M$ increases, the I/O cost decreases exponentially, and the corresponding time consumption can be estimated as $\frac{\beta A\alpha^{M}n}{Bv}$, where $B$ denotes the disk's page size and $v$ denotes the disk's IOPS. Since the IOPS of current mainstream SSD is very high, the loss of the I/O's time consumption can be negligible compared to the gains in the CPU's running time for the optimized partition number. When using the low-level storage device, this time consumption incurred by I/O operations can be added to the above cost model for deriving the optimized number of partitions.

\subsection{Partitioning Strategy}\label{Partition Strategy}
Initially, we simply choose an equal and contiguous strategy for the dimensionality partitioning. Since the final candidate set is the union of the candidate subsets of all the subspaces, the size of the candidate set depends on the size of the candidate subset of each partition. Therefore, we attempt to develop a strategy for the dimensionality partitioning to further reduce the size of the final candidate set.

Intuitively, when the number of the candidate points generated from each partition is constant, if the intersection between these partitions is small, the candidate set will be large. In the worst case, the final candidate set reaches the maximum in the case that their intersection is empty. Conversely, large portions of overlap among these partitions' candidate sets may lead to a smaller candidate set. At best, each of them is equivalent to the final candidate set. Therefore, our objective is to reduce the size of the entire candidate set by making the partitions' candidate set overlap as large and as possible.

Simply, if there exist two identical partitions, their corresponding candidate sets will overlap completely leading to a smaller candidate set. Therefore, it is crucial to measure and compare the similarities between different partitions. However, there is no reasonable indicator satisfying our requirements. To address this issue, we simplify partitions to dimensions and employ Pearson Correlation Coefficient to measure the correlations between different dimensions and the correlations are utilized to indicate the similarities. The dimensions with strong correlations will be assigned to different partitions to make dimensions in each partition uniformly distributed. It is a heuristic algorithm and named Pearson Correlation Coefficient-based Partition (PCCP) for the dimensionality partitioning.

Given a \emph{d}-dimensional dataset, we attempt to divide \emph{d} dimensions into \emph{M} partitions while each partition contains $\left\lceil d/N\right\rceil$ dimensions. The correlation between two arbitrary dimensions $X$ and $Y$ is measured by the Pearson correlation coefficient $r(X,Y)=cov(X,Y)/\sqrt{var(X)var(Y)}$, where $cov(X,Y)$ represents the covariance between $X$ and $Y$, and $var(X)$ and $var(Y)$ represent the variances of $X$ and $Y$, respectively. Specifically, we only consider the level of dimensions' correlations and ignore whether it is positive or negative, so we only consider the absolute values of the Pearson correlation coefficients. Our proposed strategy consists of two steps:

\textbf{Assignment}: We assign \emph{d} dimensions to \emph{M} groups and attempt to ensure high similarities between the dimensions in each group. In detail, we first select a dimension randomly and insert it into a group, find the dimension having the largest Pearson correlation coefficient with the selected dimension and insert it into the same group. Then, the dimension which has the largest absolute Pearson correlation coefficient with an arbitrary inserted dimension is selected and inserted into the current group. Following these steps, we continue searching and inserting until \emph{M} dimensions have been inserted and a group is formed. The above steps are repeated until all \emph{d} dimensions are assigned to $\left\lceil d/M\right\rceil$ different groups.

\textbf{Partitioning}: We select a dimension from every group and insert it into the current partition so that each partition has $\left\lceil d/N\right\rceil$ dimensions. The above steps are repeated until the \emph{M} partitions are obtained.

Fig.~\ref{Visio-PCCP} illustrates the process by an example. There exists a six dimensional dataset whose dimensions are denoted by $a$, $b$, $c$, $d$, $e$ and $f$. In the assignment, we randomly select a dimension such as $a$, and compute the correlations between $a$ and other dimensions. We find $|r(a,e)|$ is the largest, so we assign $a$ and $e$ to the first group. Whereafter, we compute the correlations between $e$ and other dimensions since we have computed the correlations between $a$ and other dimensions. We find $|r(e,f)|$ is the largest among the correlations between $a$ and other dimensions except $e$, as well as the correlations between $e$ and other dimensions except $a$. So we insert $f$ into the group containing $a$ and $e$. The others, $b$, $c$ and $d$, are assigned to the second group. After assignment, we randomly select a dimension from each of the two groups and insert them into a partition. The above procedure is repeated three times and the final partitions are obtained as Fig.~\ref{Visio-PCCP}.
\begin{figure}
\centerline{\includegraphics[width=0.36\textwidth]{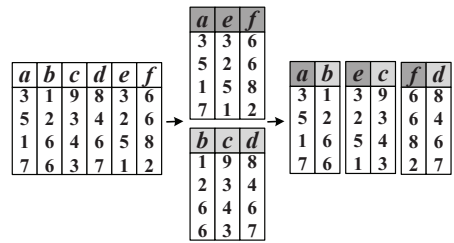}}
\caption{An example of PCCP} \label{Visio-PCCP}
\end{figure}
\vspace{-10pt}
\section{Indexing Subspaces for \emph{k}NN Search}\label{BB-Forest}
After partitioning, the original high-dimensional space is divided into many low-dimensional subspaces. This enables us to use existing indexing techniques which are efficient for low-dimensional spaces such as~\cite{DBLP:conf/icml/Cayton08, DBLP:journals/pvldb/ZhangOPT09}. Although the previous BB-tree~\cite{DBLP:conf/icml/Cayton08} is only designed for the \emph{k}NN search, an algorithm is explored to specifically solve the range query based on BB-tree in~\cite{DBLP:conf/nips/Cayton09}, which shows good performance. Therefore, we employ BB-tree for the range queries in each partition's filtering process and adopt the algorithm in~\cite{DBLP:conf/nips/Cayton09} in this paper. We construct a BB-tree for each subspace and all the BB-trees form a \emph{BB-forest}. Note that directly conducting \emph{k}NN queries in each partition cannot obtain a correct candidate. Thus the \emph{k}NN algorithm and the filtering condition proposed in~\cite{DBLP:conf/icml/Cayton08} cannot be simply extended using the partition strategy. On the other hand, the range query algorithm proposed in~\cite{DBLP:conf/nips/Cayton09} cannot directly solve the \emph{k}NN problem. Therefore, our framework is essentially different from the existing BB-tree based filtering algorithms.

Intuitively, if we store each node's corresponding cluster's center and radius, and their pointers to their children or the addresses of the data points in this cluster, BB-tree can be simply extended to the disks to process large-scale datasets. Fig.~\ref{BB-forest} illustrates the disk-resident integrated index structure which consists of $M$ BB-trees constructed in the $M$ partitioned subspaces. In each BB-tree, the intermediate nodes store their corresponding clusters' centers and radii, denoted by $C.center$ and $C.radius$, respectively. And the leaf nodes store not only clusters' centers and radii, but the addresses of the data points in their corresponding clusters, denoted by $P.address$, which are used for reading the original high-dimensional data points from the disks.

According to~\cite{DBLP:conf/nips/Cayton09}, which develops an algorithm for efficient range search with Bregman distances, only each cluster's center and radius are required to determine whether two clusters intersect or one cluster contains another during the search processing. Therefore, the disk-resident version of BB-tree still works for efficient range search with Bregman distances. However, since our proposed BB-forest is an integrated index structure, how to organize the data on the disk is a major issue to address. If the clusters in these subspaces are significantly different, different BB-trees' candidate nodes obtained by ranges queries will index different data points which causes more I/O accesses when reading high-dimensional data points from the disks.

Benefitting from our devised PCCP, similar clusters are obtained in different subspaces. Therefore, after the dimensionality partitioning, we construct a BB-tree in a randomly selected subspace and the original high-dimensional data points indexed by each leaf node are organized according to the order of these leaf nodes. At the same time, the address of each point containing the disk number and offset is recorded. During the construction of other BB-trees, the recorded addresses of these data points are stored in leaf nodes for indexing. Since the candidate points in different subspaces are the same at best, we will read the same part of the disks when performing range queries in different subspaces to reduce I/O accesses.
\begin{figure}
\centerline{\includegraphics[width=0.35\textwidth]{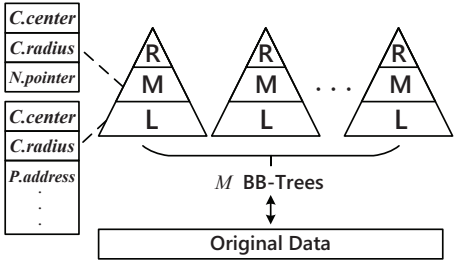}}
\smallskip
\caption{BB-forest} \label{BB-forest}
\end{figure}
\vspace{-10pt}
\section{Overall Framework}\label{Putting Everything Together}
In this paper, we solve the \emph{k}NN search problem with Bregman distances in high-dimensional space by our proposed partition-filter-refinement framework. The framework consists of the precomputation in Algorithm~\ref{BrePartition1} and the search processing in Algorithm~\ref{BrePartition2}.
In Algorithm~\ref{BrePartition1}, we first determine the number of partitions (Line 2), which is introduced in Section~\ref{Number of Partitions}. Then, we perform the dimensionality partitioning in the dataset based on our proposed strategy PCCP denoted as $\textsf{PCCP} (S)$ (Line 3), which is introduced in Section~\ref{Partition Strategy}. Finally, we transform the original data points in each subspace into tuples used for computing the upper bounds later (Lines 4-7), which is introduced in Section~\ref{Derivation of Bound}. At the same time, the partitioned data points will be indexed by the BB-trees of all the subspaces forming the BB-forest (Line 8), and the construct of the BB-forest is denoted as $\textsf{BBFConstruct} (S_{p})$ which is introduced in Section~\ref{BB-Forest}.
\begin{algorithm}[htb]
  \small\caption{\textsf{BrePartitionConstruct} ($S$,$n$,$d$,$A$,$\alpha$,$\beta$)}
  \label{BrePartition1}
  \begin{algorithmic}[1]
    \Require
    $S$, $n$, $d$, $A$, $\alpha$, $\beta$.
    \Ensure
    $S_{t}$, $BBF$.
      \State $S_{t}=\emptyset$, $BBForest=\emptyset$;
      \State $M=\log_{\alpha}{\frac{2n}{-\beta And\ln{\alpha}}}$;
      \State $S_{p}=\textsf{PCCP} (S)$; 
      \For{each $X\in S_{p}$}
        \State $P=\textsf{PTransform} (X)$; // Algorithm~\ref{PTransform}
        \State $S_{t}=S_{t}\cup P$;
      \EndFor
      \State $BBF=\textsf{BBFConstruct} (S_{p})$;
      \Return $S_{t}$, $BBF$;
    \end{algorithmic}
\end{algorithm}

In Algorithm~\ref{BrePartition2}, we first rearrange the query point according to the partitioned data points (Line 2). Second, we transform the query point into triples in each subspace (Line 3) and compute the searching bounds from all the subspaces (Line 4), which is introduced in Section~\ref{Derivation of Bound}. Third, we perform range queries with the computed searching bounds over all the BB-trees in the BB-forest for retrieving the candidates (Lines 5-7). Finally, we evaluate the candidate points for the \emph{k}NN points and return the result set (Lines 8-9).
\begin{algorithm}[htb]
  \small\caption{\textsf{BrePartitionSearch} ($S_{p}$,$y$,$k$,$S_{t}$,$BBF$)}
  \label{BrePartition2}
  \begin{algorithmic}[1]
    \Require
    $S_{p}$, $y$, $k$, $S_{t}$, $BBF$.
    \Ensure
    $Res$.
      \State $Cand=\emptyset$, $Res=\emptyset$;
      \State Rearrange the query point to obtain $Y$;
      \State $Q=\textsf{QTransform} (Y)$; // Algorithm~\ref{QTransform}
      \State $QB=\textsf{QBDetermine} (S_{t},Q)$; // Algorithm~\ref{QBDetermine}
      \For{each $T_{j}\in BBF$}
        \State $Cand=Cand\cup T_{j}.\textsf{rangeQuery} (Y[j],QB[j])$;
      \EndFor
      \State Evaluate the \emph{k}NN result set $Res=\{x_{1},x_{2},...,x_{k}\}$;\\
      \Return $Res$;
    \end{algorithmic}
\end{algorithm}
\vspace{-10pt}
\section{Extension to Approximate \textit{k}NN Search}\label{Extension}
In previous sections, we mainly concentrate on the exact retrieval of \textit{k}NN searches. However, through our research, we can tighten the bounds derived above for comparable approximate results with probability guarantees to improve the efficiency. We present an approximate solution with probability guarantees. By the solution, given a query point $q$ and a probability guarantee $p$ ($0<p\leq 1$), the retrieval $k$ points are the exact \emph{k}NN points of $q$ with the probability guarantee $p$.

According to the previous description in Theorem~\ref{Theorem 1}, we relax the item $\beta_{xy}$ by employing Cauchy-inequality and an exact bound for the \emph{k}NN search can be derived. The exact searching bound can be simply represented in the form of $\mathcal{\kappa+\mu}$, where
$$\mathcal{\kappa}=\sum_{i=1}^{d}{f\left({x}_{i}\right)}-\sum_{i=1}^{d}{f\left({y}_{i}\right)}+\sum_{i=1}^{d}(y_{i}\times \frac{\partial f(y)}{\partial y_{i}})$$
which isn't affected when computing the upper bound, and
$$\mathcal{\mu}=\sqrt{\sum_{i=1}^{d} x_{i}^{2}\times \sum_{i=1}^{d}{\frac{\partial f(y)}{\partial y_{i}}}^{2}}$$
which is obtained by relaxing $\beta_{xy}$. Therefore, the searching bound is actually determined by $\mathcal{\mu}$. In our proposed approximate \emph{k}NN search solution, $\mu$ is tightened by multiplying a coefficient denoted as $c$ ($0<c\leq 1$) in the context that the distribution of each dimension's data is known. With this tighter bound, we can obtain the \emph{k}NN results more efficiently within a smaller bound with the probability guarantee. Therefore, we mainly focus on how to derive the value of $c$ when the value of $\mathcal{\mu}$ is known.
\newtheorem{proposition}{Proposition}
\begin{proposition}\label{App_Prop}
We set the following two events with respect to $A$ and $B$, which describe the exact condition and the approximate condition, respectively. And $C$ indicates that Bregman distances are non-negative.
$$A:\beta_{xy}<\mathcal{\mu}, B:\beta_{xy}\leq c\mathcal{\mu} (0<c\leq 1), C:\beta_{xy}\geq -\mathcal{\kappa}.$$
When the distribution of each dimension in the dataset is known\footnote{Histograms can be used to describe each dimension's distribution and a known distribution similar to the histogram (such as Normal distribution) can be chosen to fit the dimension's distribution by the least squares method.} and the given probability guarantee is $p$, the value of $c$ is:
$$c=\Psi_{\beta_{xy}}^{-1}(p\Psi_{\beta_{xy}}(\mathcal{\mu})+(1-p)\Psi_{\beta_{xy}}(-\mathcal{\kappa}))/\mathcal{\mu},$$
where $\Psi_{\beta_{xy}}$ and $\Psi_{\beta_{xy}}^{-1}$ are used to denote the variable $\beta_{xy}$'s cumulative distribution function (cdf) and its inverse function, respectively.
\end{proposition}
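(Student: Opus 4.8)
The plan is to reduce the statement to a single scalar equation about the distribution of $\beta_{xy}$ and then invert it. First I would write the (coordinate-separable) Bregman distance in the form $D_f(x,y)=\kappa+\beta_{xy}$, where $\kappa=\alpha_x+\alpha_y+\beta_{yy}$ collects exactly the three sums defining $\kappa$ in the statement and $\beta_{xy}=-\sum_{i=1}^d x_i\,\partial f(y)/\partial y_i$ is the only term that gets relaxed. With this decomposition the three events acquire a clean meaning: $C=\{\beta_{xy}\ge-\kappa\}$ is literally the non-negativity $D_f(x,y)\ge0$ of the Bregman distance; $A=\{\beta_{xy}<\mu\}$ is the exact filtering condition, since Theorem~\ref{Theorem 1} applied in the unpartitioned space bounds $\beta_{xy}\le\sqrt{\gamma_x\delta_y}=\mu$ via the Cauchy inequality; and $B=\{\beta_{xy}\le c\mu\}$ is the tightened approximate condition obtained by shrinking $\mu$ to $c\mu$.

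Next I would fix the query $y$ and regard $x$ as drawn from the data distribution, so that $\beta_{xy}$ becomes a real-valued random variable whose cdf is the stated $\Psi_{\beta_{xy}}$; because $\beta_{xy}$ is a fixed linear combination $-\sum_i x_i\,\partial f(y)/\partial y_i$ of the coordinates, its distribution is determined by the per-dimension distributions assumed known in the footnote. The probability guarantee $p$ is then to be read as the conditional probability that a point which is a genuine candidate under the exact bound is still retained under the approximate bound, i.e.\ $p=\Pr(B\mid A\cap C)$. Since $c\le1$ forces $B\subseteq A$, we have $B\cap A\cap C=B\cap C$, and I would compute
\begin{equation*}
\Pr(B\mid A\cap C)=\frac{\Pr(-\kappa\le\beta_{xy}\le c\mu)}{\Pr(-\kappa\le\beta_{xy}<\mu)}=\frac{\Psi_{\beta_{xy}}(c\mu)-\Psi_{\beta_{xy}}(-\kappa)}{\Psi_{\beta_{xy}}(\mu)-\Psi_{\beta_{xy}}(-\kappa)},
\end{equation*}
using the ordering $-\kappa\le c\mu\le\mu$ together with the monotonicity of a cdf. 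Setting this equal to $p$ and solving the resulting linear relation for $\Psi_{\beta_{xy}}(c\mu)$ gives $\Psi_{\beta_{xy}}(c\mu)=p\,\Psi_{\beta_{xy}}(\mu)+(1-p)\,\Psi_{\beta_{xy}}(-\kappa)$; applying $\Psi_{\beta_{xy}}^{-1}$ and dividing by $\mu$ yields exactly the claimed expression for $c$.

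The routine parts are the algebra above and the check $-\kappa\le c\mu\le\mu$, which guarantees that the cdf differences are the probabilities of the intended intervals and that the right-hand side lands in the range of $\Psi_{\beta_{xy}}$ so that the inverse is well defined. The main obstacle is conceptual rather than computational: justifying the probabilistic model itself---that the guarantee is correctly phrased as $\Pr(B\mid A\cap C)$, that $\beta_{xy}$ is continuously distributed with an invertible cdf (so $<$ and $\le$ coincide and $\Psi_{\beta_{xy}}^{-1}$ is meaningful), and that conditioning on $C$ is the right way to encode \emph{Bregman distances are non-negative}. I would therefore devote most of the write-up to arguing that this conditional probability is precisely the per-point probability of not discarding a true neighbour, so that the derived $c$ delivers the promised guarantee $p$.
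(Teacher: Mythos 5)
Your proof is correct and follows essentially the same route as the paper's: read the guarantee as the conditional probability $\Pr(B\mid A\cap C)=\bigl(\Psi_{\beta_{xy}}(c\mu)-\Psi_{\beta_{xy}}(-\kappa)\bigr)/\bigl(\Psi_{\beta_{xy}}(\mu)-\Psi_{\beta_{xy}}(-\kappa)\bigr)$, set it equal to $p$, and invert the cdf to solve for $c$. Your supporting points --- the decomposition $D_f(x,y)=\kappa+\beta_{xy}$, the identification of $C$ with non-negativity of the Bregman distance, $B\subseteq A$ up to the boundary case, and the continuity/monotonicity assumptions on $\Psi_{\beta_{xy}}$ --- are exactly the ingredients the paper's derivation rests on.
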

\begin{proof}
Please see Section 5 in the supplementary file.
\end{proof}
\begin{figure}
\centerline{\includegraphics[width=0.4\textwidth]{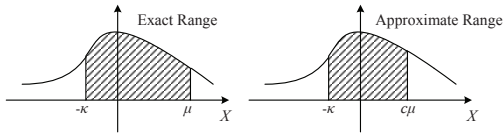}}
\caption{Exact and Approximate Cases}\label{Visio-distribution}
\end{figure}

Proposition~\ref{App_Prop} illustrates that the exact bound is tightened by multiplying an approximate coefficient to obtain the approximate results with probability guarantees in the condition that the distribution of each dimension in the dataset is known. Fig.~\ref{Visio-distribution} shows the exact and approximate cases intuitively. When given the exact bound, we firstly compute the approximate coefficient in the original space and the exact bounds obtained in all partitions are multiplied by the approximate coefficient to be each partition's new approximate bound.
\vspace{-10pt}
\section{Experiments}\label{Experiments}
We conduct extensive experiments on four real datasets and two synthetic datasets to verify the efficiency of our methods. In this section, we present the experimental results.
\vspace{-10pt}
\subsection{Experiment Setup}
\subsubsection{Benchmark Method}
According to TABLE~\ref{Summary}, we select two state-of-the-art techniques~\cite{DBLP:journals/pvldb/ZhangOPT09, DBLP:conf/icml/Cayton08}, which are designed for the exact \emph{k}NN search for Bregman distances. We denote them by "VAF" and "BBT", respectively. Our method is denoted by "BP" (the abbreviation of "BrePartition") and its approximate versions are denoted by "ABP" (the abbreviation of "Approximate BrePartition"). Three key parameters are shown in TABLE~\ref{Parameters}. We measure the index construction time, I/O cost and CPU time to verify the efficiency of the benchmarks and our proposed method. These methods are implemented in Java. All experiments were done on a PC with Intel Core i7-2600M 3.40GHz CPU, 8 GB memory and 1 TB WD Blue 3D NAND SATA SSD, running Windows x64.
\begin{table}[H]\centering
\caption{Parameters}
\label{Parameters}
\begin{tabular}{ll}
\hline\noalign{\smallskip}
Parameter & Varying Range \\
\noalign{\smallskip}\hline\noalign{\smallskip}
Result Size $k$ & 20, 40, 60, 80, 100 \\
Dimensionality (Fonts) & 100, 200, 300, 400 \\
Data Size (Sift) & 2M, 4M, 6M, 8M, 10M \\
\noalign{\smallskip}\hline
\end{tabular}
\end{table}
\subsubsection{Datasets}
Four real-life datasets Audio\footnote{http://www.cs.princeton.edu/cass/audio.tar.gz}, Fonts\footnote{http://archive.ics.uci.edu/ml/datasets/Character+Font+Images}, Deep\footnote{http://yadi.sk/d/I$\_$yaFVqchJmoc}, Sift\footnote{https://archive.ics.uci.edu/ml/datasets/SIFT10M} and two synthetic datasets are summarized in TABLE~\ref{Dataset}. Normal is a 200-dimensional synthetic dataset which has 50000 points generated by simulating the random number of standard normal distribution. Uniform is a 200-dimensional synthetic dataset which has 50000 points generated by simulating the random number of uniform distribution between $[0,100]$. Normal and Uniform are only used for evaluating our proposed approximate solution. For all these datasets, 50 points are randomly selected as the query sets. And we randomly select 50 samples for computing $A$, $\alpha$ and $\beta$.
\begin{table}[H]
\centering
\caption{Datasets}
\label{Dataset}
\begin{tabular}{llllll}
\hline\noalign{\smallskip}
Parameter & $n$ & $d$ & $M$ & Page Size & Measure\\
\noalign{\smallskip}\hline\noalign{\smallskip}
Audio & 54387 & 192 & 28 & 32KB & ED \\
Fonts & 745000 & 400 & 50 & 128KB & ISD \\
Deep & 1000000 & 256 & 37 & 64KB & ED \\
Sift & 11164866 & 128 & 22 & 64KB & ED \\
Normal & 50000 & 200 & 25 & 32KB & ED \\
Uniform & 50000 & 200 & 21 & 32KB & ISD \\
\noalign{\smallskip}\hline
\end{tabular}
\end{table}

\begin{figure}[H]\centering
\includegraphics[width=0.35\textwidth]{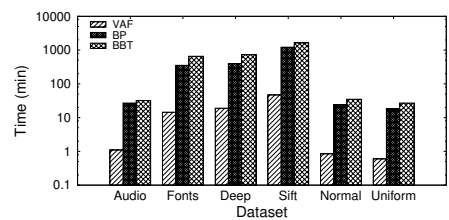}
\caption{Index Construction Time} \label{Index Construction Time}
\end{figure}

\begin{figure*}
\hspace{-10pt}
\subfigure[Audio]{
\includegraphics[width=.25\textwidth]{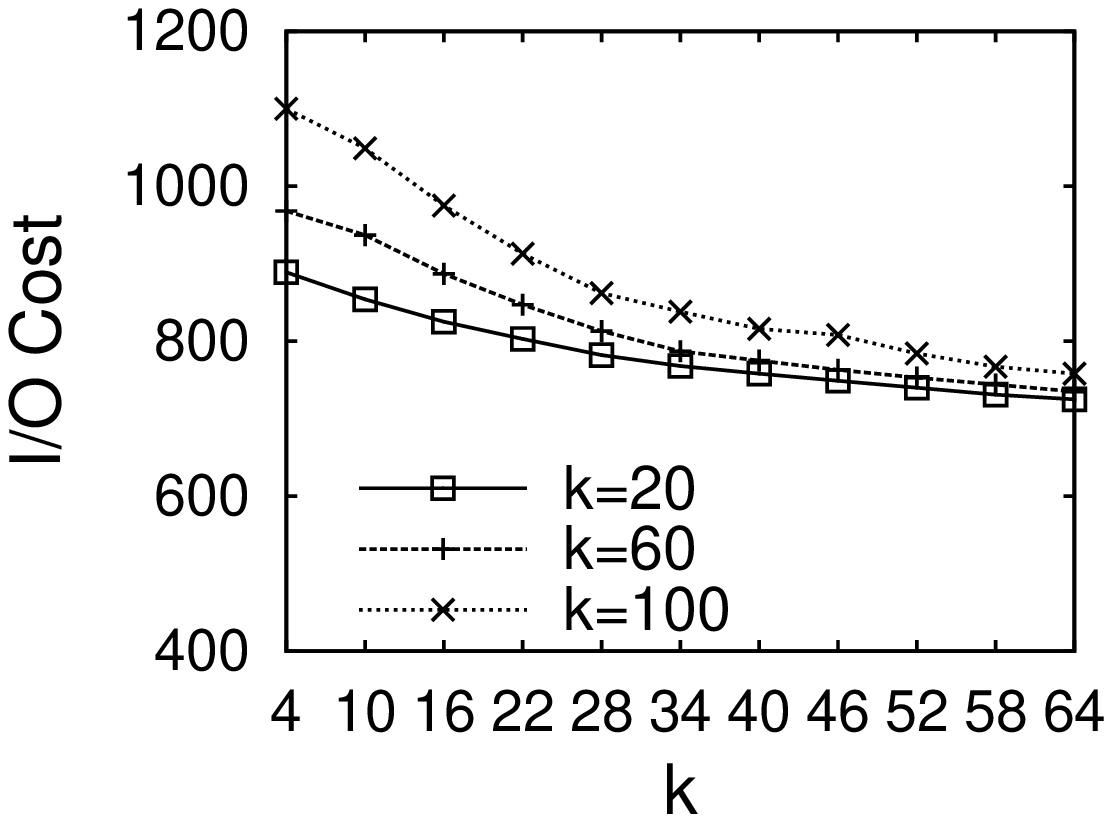}}
\hspace{-10pt}
\subfigure[Fonts]{
\includegraphics[width=.25\textwidth]{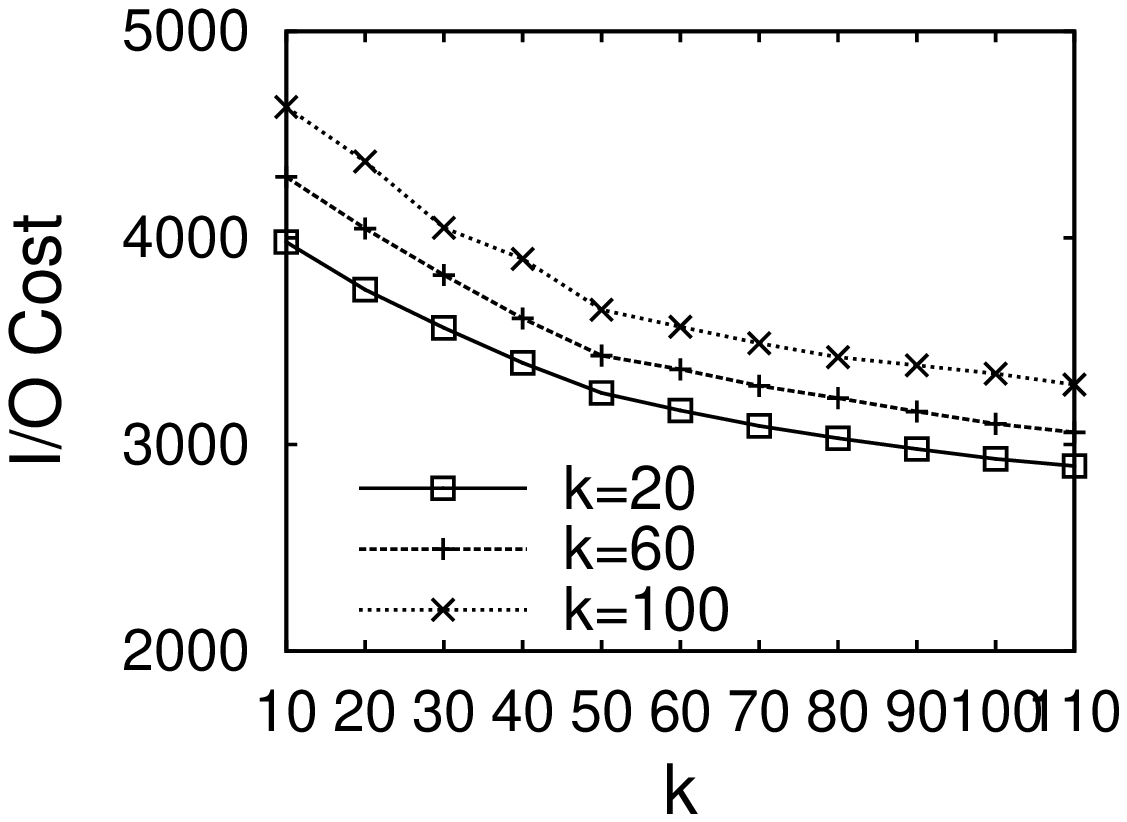}}
\hspace{-10pt}
\subfigure[Deep]{
\includegraphics[width=.25\textwidth]{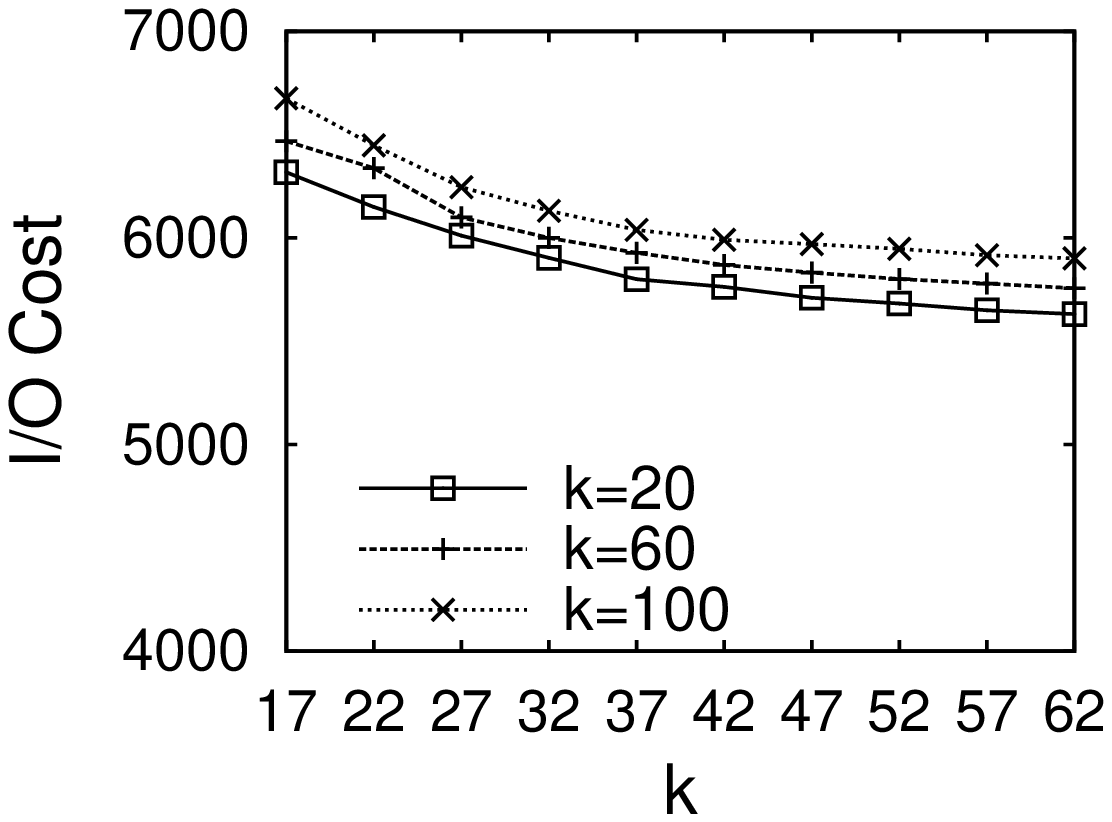}}
\hspace{-10pt}
\subfigure[Sift]{
\includegraphics[width=.25\textwidth]{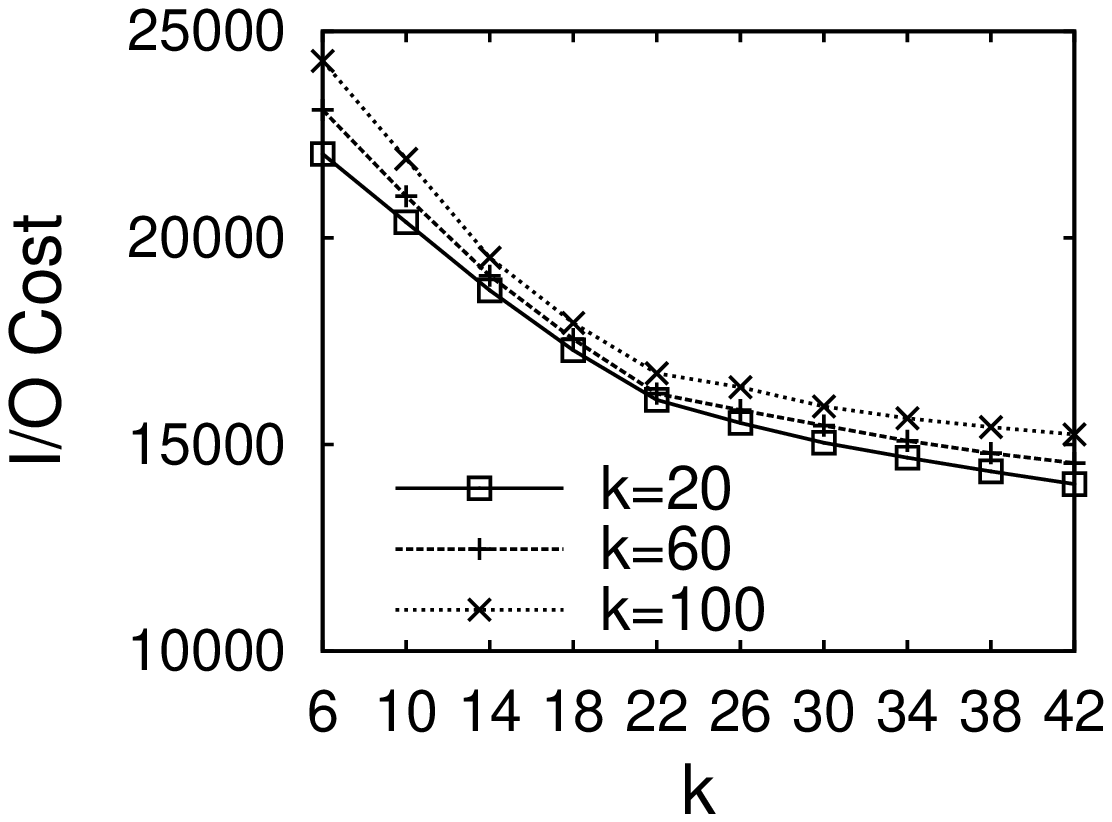}}
\caption{I/O Cost (Impact of $M$)}
\label{I/O Cost (Impact of $M$)}
\end{figure*}

\begin{figure*}
\hspace{-10pt}
\subfigure[Audio]{
\includegraphics[width=.25\textwidth]{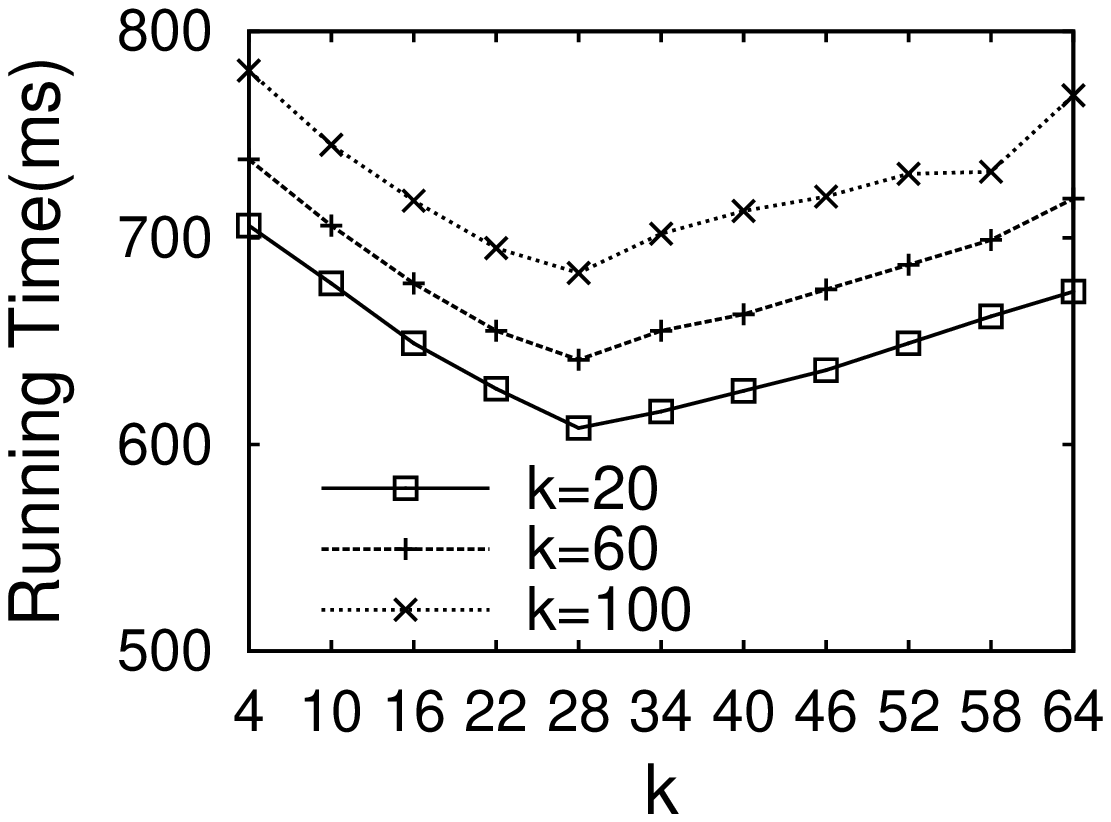}}
\hspace{-10pt}
\subfigure[Fonts]{
\includegraphics[width=.25\textwidth]{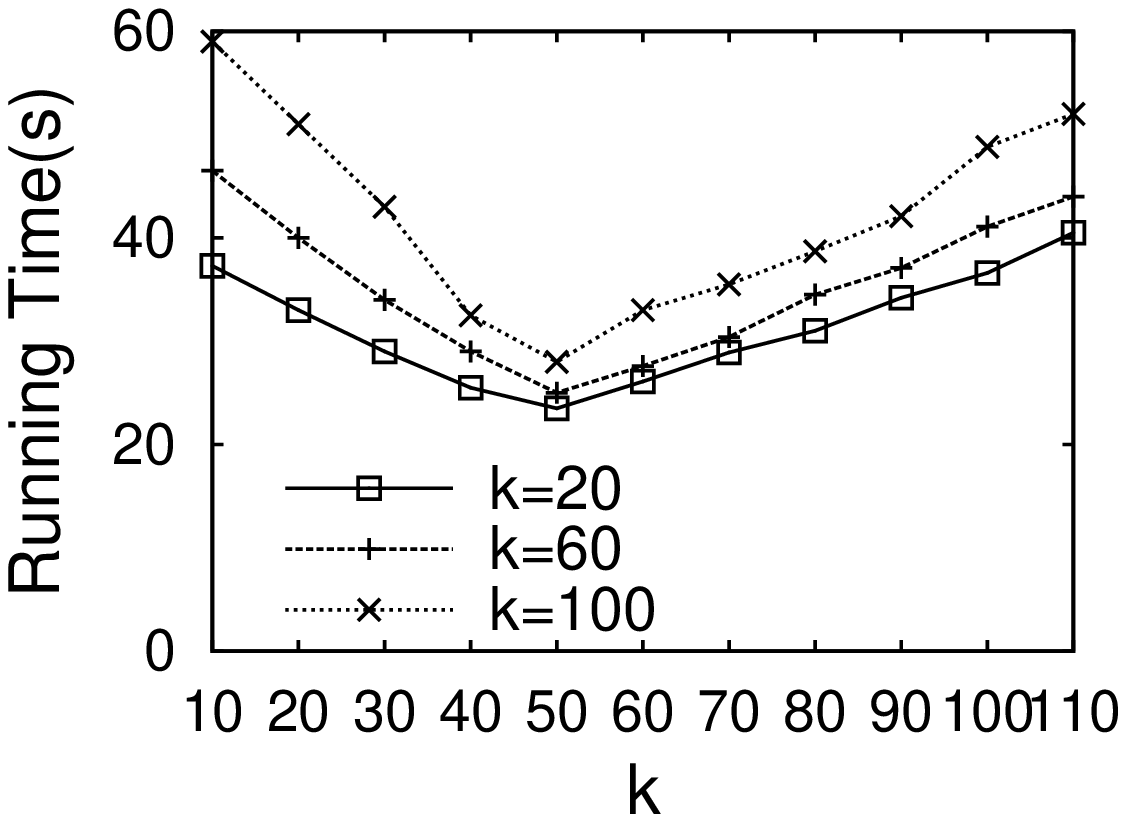}}
\hspace{-10pt}
\subfigure[Deep]{
\includegraphics[width=.25\textwidth]{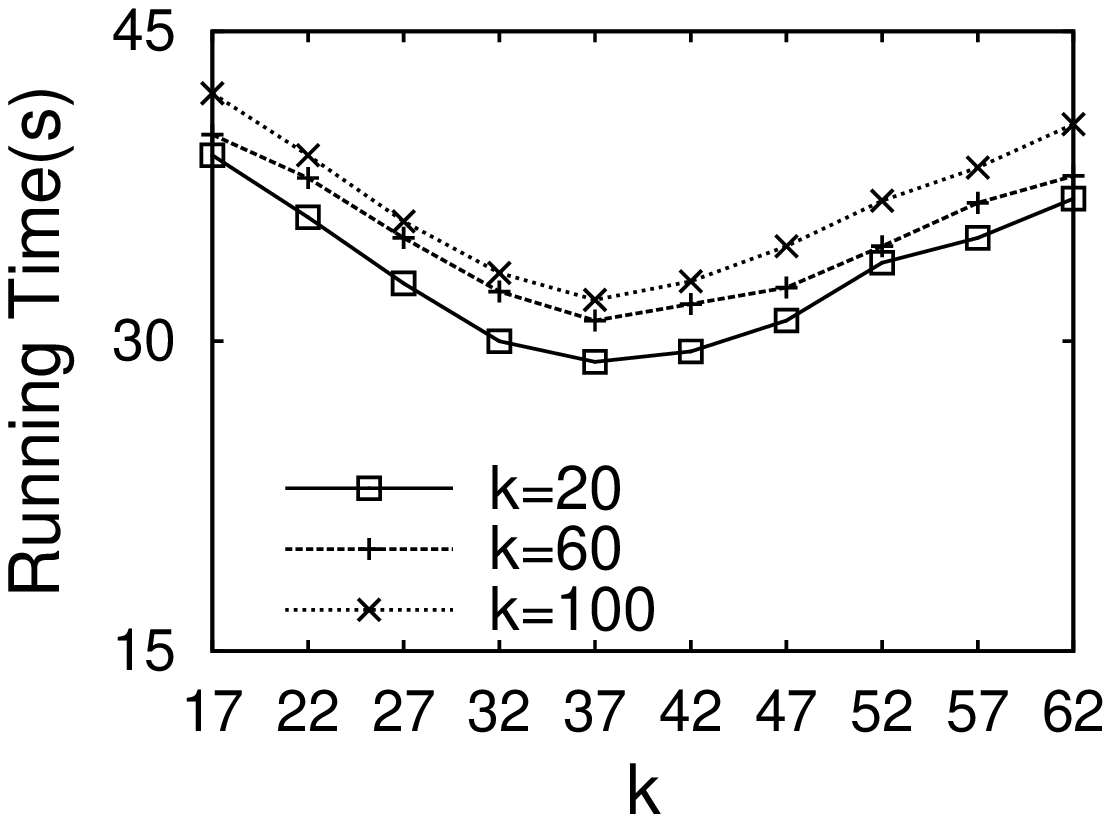}}
\hspace{-10pt}
\subfigure[Sift]{
\includegraphics[width=.25\textwidth]{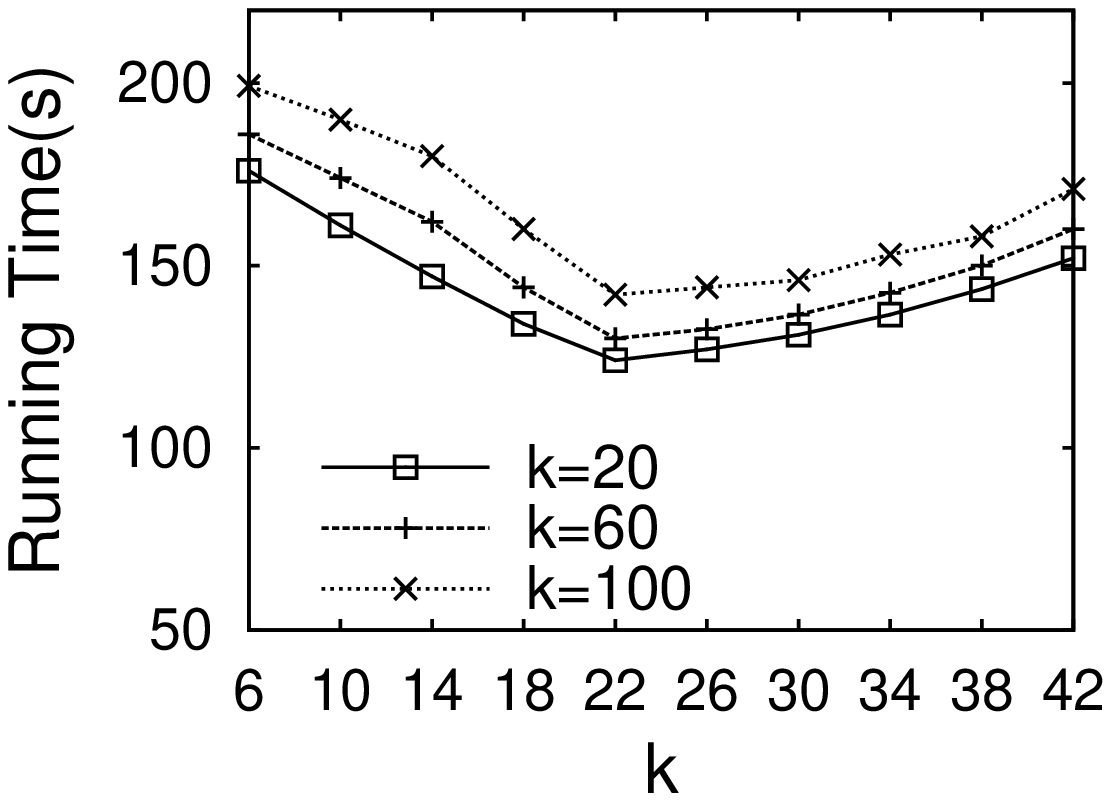}}
\caption{Running Time (Impact of $M$)}
\label{Running Time (Impact of $M$)}
\end{figure*}
\vspace{-10pt}
\subsection{Index Construction Time}
In Fig.~\ref{Index Construction Time}, we illustrate the index construction time of all the three testing methods on six testing datasets. The construction of VA-file is the fastest among all the methods on all the datasets, while the index structures constructed based on Bregman balls, such as BB-forest and BB-tree, increase the construction time by at least one magnitude because of the time-consuming clustering process. And the construction of BB-tree is slower than our proposed BB-forest since the clustering becomes less effective with the increase of dimensionality. In addition, it's indicated that the construction of VA-file becomes slower in high-dimensional spaces as well.
\vspace{-10pt}
\subsection{Impact of Dimensionality Partitioning}
In our method, we develop the dimensionality partitioning technique to solve the \emph{k}NN search problem in the high-dimensional space. Through theoretical analysis, the optimized number of partitions $M$ can be derived and we obtain the optimized numbers of partitions on the four datasets according to Theorem~\ref{Theorem 4}, which are shown in TABLE~\ref{Dataset}. In addition, a novel strategy named PCCP are explored. In this part, we will evaluate the impact of the parameter $M$ and PCCP on the efficiency of our method and validate that the derived value of $M$ is optimized.
\subsubsection{Impact of the number of partitions $M$}
We show the results on four real datasets to demonstrate the impact of $M$ on the efficiency. We vary $k$ to 20, 40 and 60 to evaluate the I/O cost and the running time when the value of $M$ is set within a certain range and the results are shown in Figs.~\ref{I/O Cost (Impact of $M$)} and~\ref{Running Time (Impact of $M$)}. From Fig.~\ref{I/O Cost (Impact of $M$)}, the I/O cost decreases as $M$ increases in all the cases. Moreover, the I/O cost decreases more and more slowly as $M$ increases. In Fig.~\ref{Running Time (Impact of $M$)}, the trend of running time is different from that of the I/O cost, which generally has a trend of descent first and then ascent.
\subsubsection{Validation of the derived optimized value of $M$}
As shown in Fig.~\ref{I/O Cost (Impact of $M$)}, the I/O cost decreases more and more slowly as $M$ increases. This is because the number of candidate points decreases at a slower and slower rate as the searching range decreases at a slower and slower rate. As can be seen in Fig.~\ref{I/O Cost (Impact of $M$)}, the I/O cost is reduced at a low speed after $M$ reaches the optimized value. Meanwhile, from Fig.~\ref{Running Time (Impact of $M$)}, when the values of $M$ are set to our derived optimized number of partitions on the four datasets, the CPU's running time is minimum in all cases which brings up to $30\%$ gains. In addition, the random reads of the current mainstream SSD are about 5k IOPS to 500k IOPS. In this case, the efficiency gains achieved by the number of the I/O reductions can be negligible compared to the gains in the CPU's running time. Therefore, the experimental results validate that our derived value of $M$ is optimized.
\subsubsection{Impact of PCCP}
In this section, we evaluate how PCCP influences the efficiency of our method by testing it in both cases with and without PCCP and the results are shown in Fig.~\ref{Impact of PCCP}. The default value of $k$ is 20. From the experimental results, both the I/O cost and the running time are reduced by $20\%$ to $30\%$ when PCCP is applied, which indicates that PCCP can reduce the number of candidate points leading to lower I/O costs and time consumption. Moreover, the experimental results demonstrate our proposed index structure, BB-forest, can avoid some invalid disk accesses relying on PCCP. We also evaluate the impact of randomly selecting the first dimension in PCCP on the performance. The experimental results are shown in the Section 7 of the supplementary file.
\begin{figure}
\centering
\hspace{-10pt}
\subfigure[I/O cost]{
\includegraphics[width=.23\textwidth]{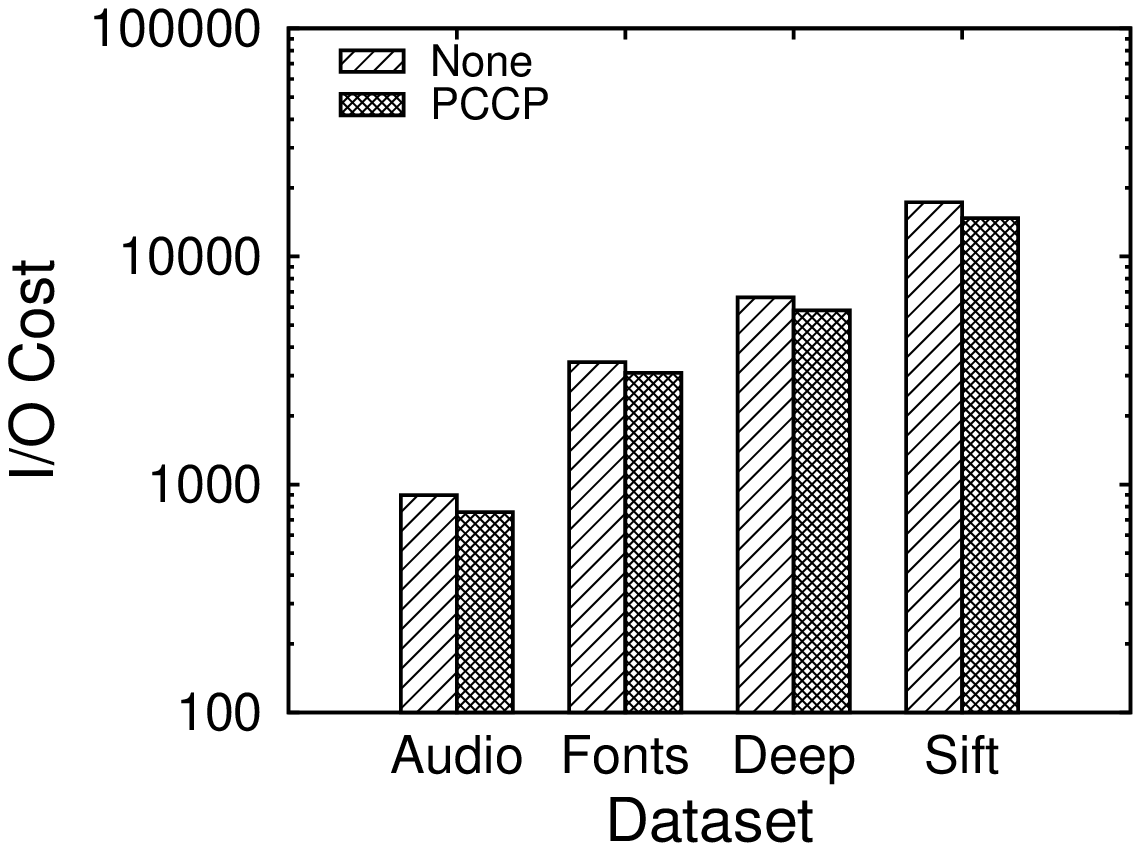}}
\hspace{-10pt}
\subfigure[Running time]{
\includegraphics[width=.23\textwidth]{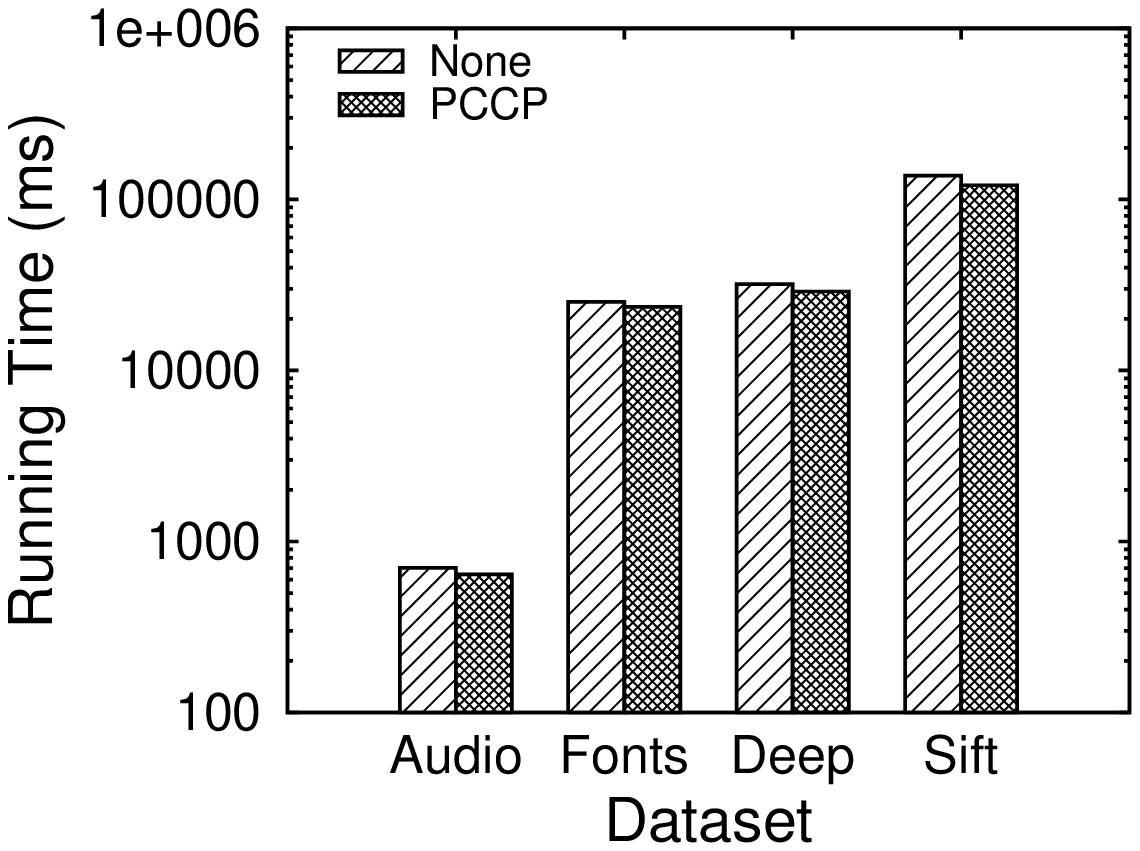}}
\caption{Impact of PCCP}
\label{Impact of PCCP}
\end{figure}
\vspace{-10pt}
\subsection{I/O Cost}\label{Sec I/O Cost}
In this part, we evaluate the I/O cost by varying $k$ from 20 to 100 and show the experimental results in Fig.~\ref{I/O Cost}. We extend the memory-resident BB-tree to a disk-resident index structure following the idea of our proposed BB-forest. As $k$ increases, BrePartition outperforms the other two methods in I/O cost almost in all testing cases. It benefits from the derived bound's good pruning effect, which leads to a small quantity of candidate points. Moreover, BrePartition possesses good performance on all datasets, which shows that our proposed method can be applied to both large-scale and high-dimensional datasets.

From Fig.~\ref{I/O Cost}, we find that the I/O cost of VA-file is lower than that of BB-tree owing to two aspects. On the one hand, the lower and upper bounds computed by their proposed search algorithm are reasonable, even though less tighter than ours. On the other hand, the whole space has been compactly divided using large number of bits via VA-file which can reduce the size of the candidate set and fewer points will be loaded into main memory. Even so, loading more candidate points and all the disk-resident vector approximations causes more extra I/O costs. Moreover, BB-tree performs the worst among all the testing methods. This is because large overlaps among clusters in the high-dimensional space causes more candidates, which should be checked for the \emph{k}NN results.
\vspace{-10pt}
\subsection{Running Time}\label{Sec Running Time}
In Fig.~\ref{Running Time}, we evaluate the CPU's running time by varying $k$ to test the efficiency of our method. In all testing cases, BrePartition performs consistently the best regardless of the dimension and the size of datasets, which shows our method's high scalability. This is mainly because we partition the original high-dimensional space into several low-dimensional subspaces and build our proposed integrated index structure, BB-forest to accelerate the search processing.

Similarly, compared to BrePartition, VA-file doesn't show satisfying results on the CPU's running time. This is because it's required to scan all the vector approximations representing the data points when pruning the useless points for the candidate points. Besides, the reverse mapping from the vector approximations to their corresponding original points when verifying the candidate points is time-consuming as well. Similar to the I/O cost, BB-tree shows the worst results, because too many candidates to be checked cause performance degradation in the high-dimensional space.
\begin{figure*}
\hspace{-10pt}
\subfigure[Audio]{
\includegraphics[width=.25\textwidth]{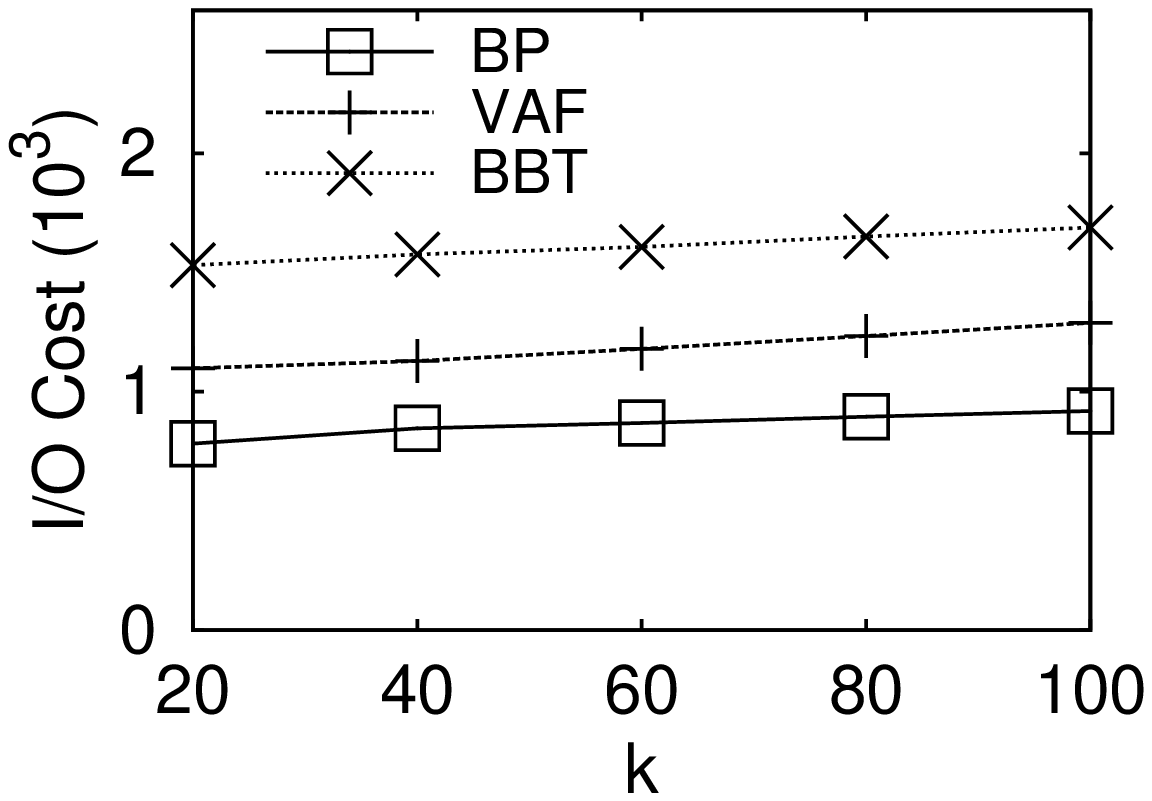}}
\hspace{-10pt}
\subfigure[Fonts]{
\includegraphics[width=.25\textwidth]{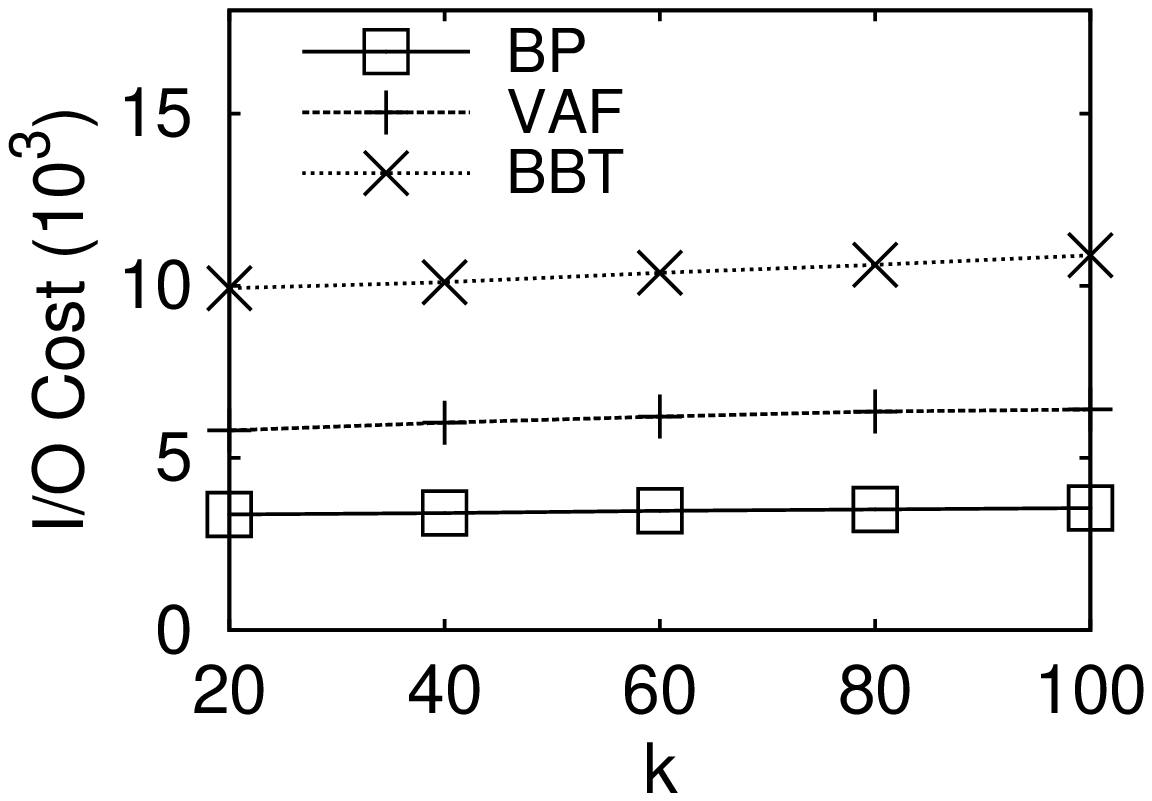}}
\hspace{-10pt}
\subfigure[Deep]{
\includegraphics[width=.25\textwidth]{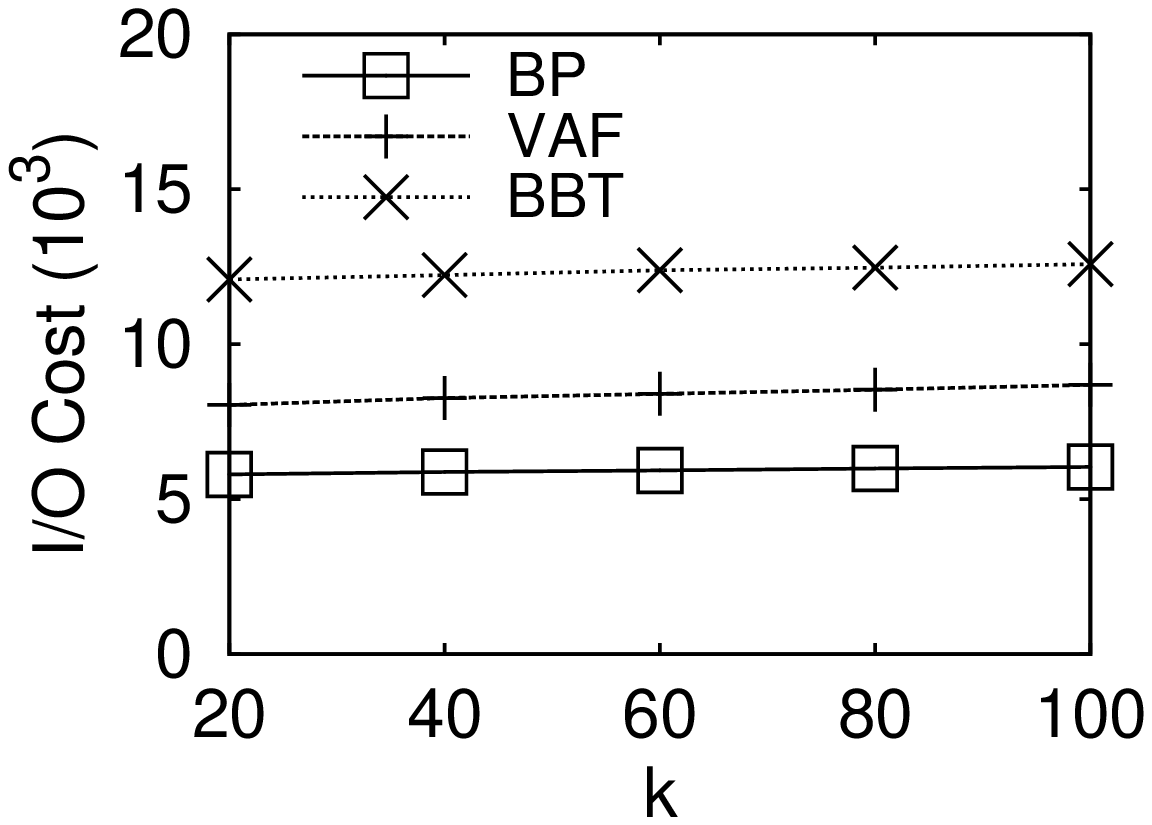}}
\hspace{-10pt}
\subfigure[Sift]{
\includegraphics[width=.25\textwidth]{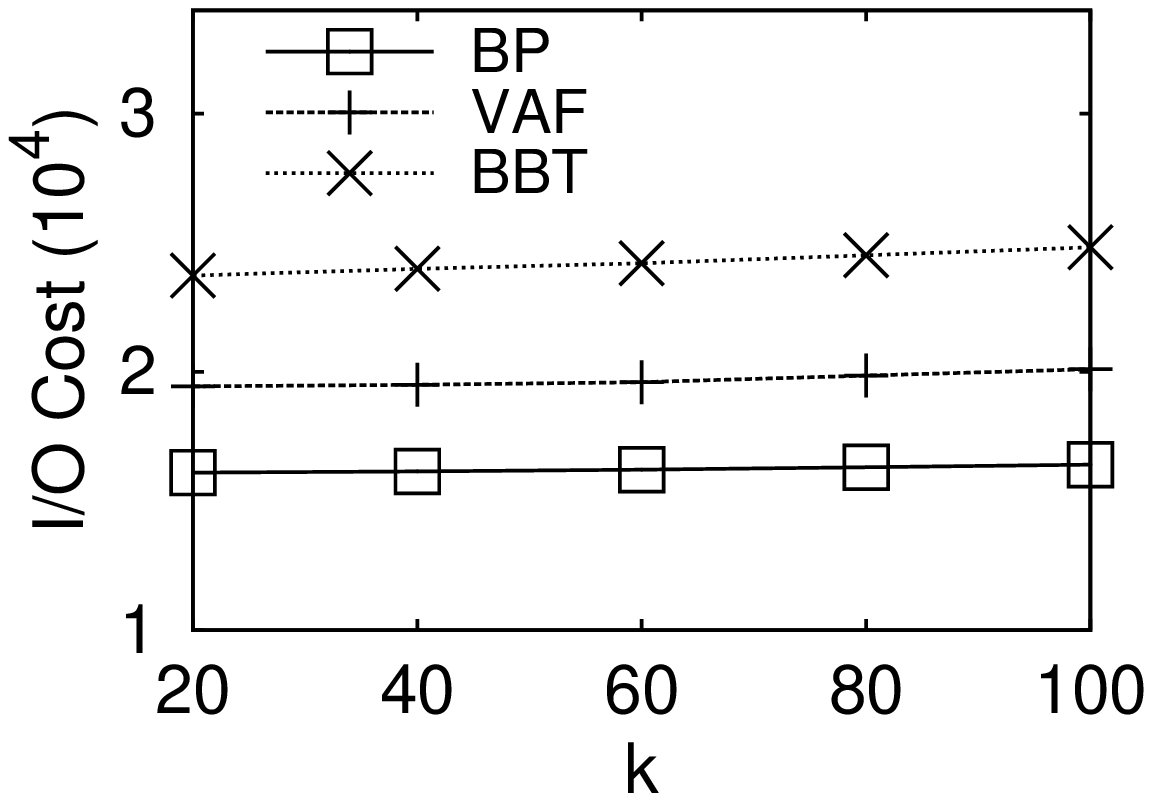}}
\caption{I/O Cost}
\label{I/O Cost}
\end{figure*}

\begin{figure*}
\hspace{-10pt}
\subfigure[Audio]{
\includegraphics[width=.25\textwidth]{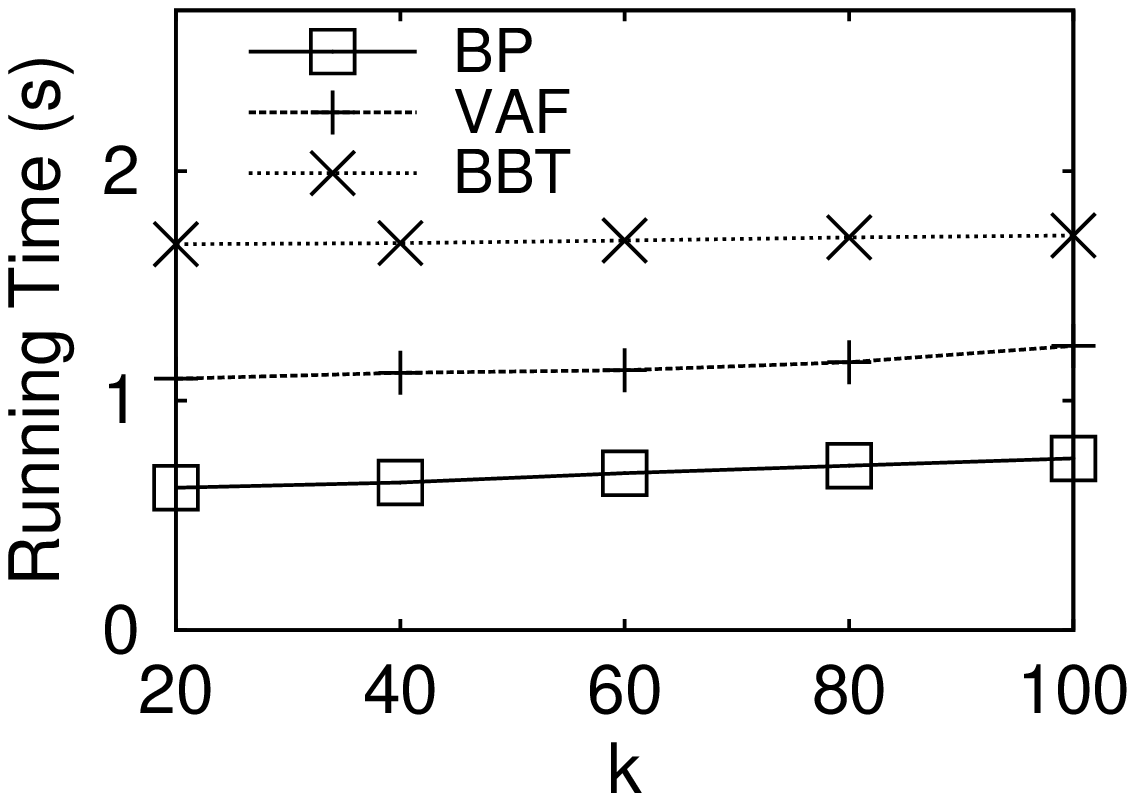}}
\hspace{-10pt}
\subfigure[Fonts]{
\includegraphics[width=.25\textwidth]{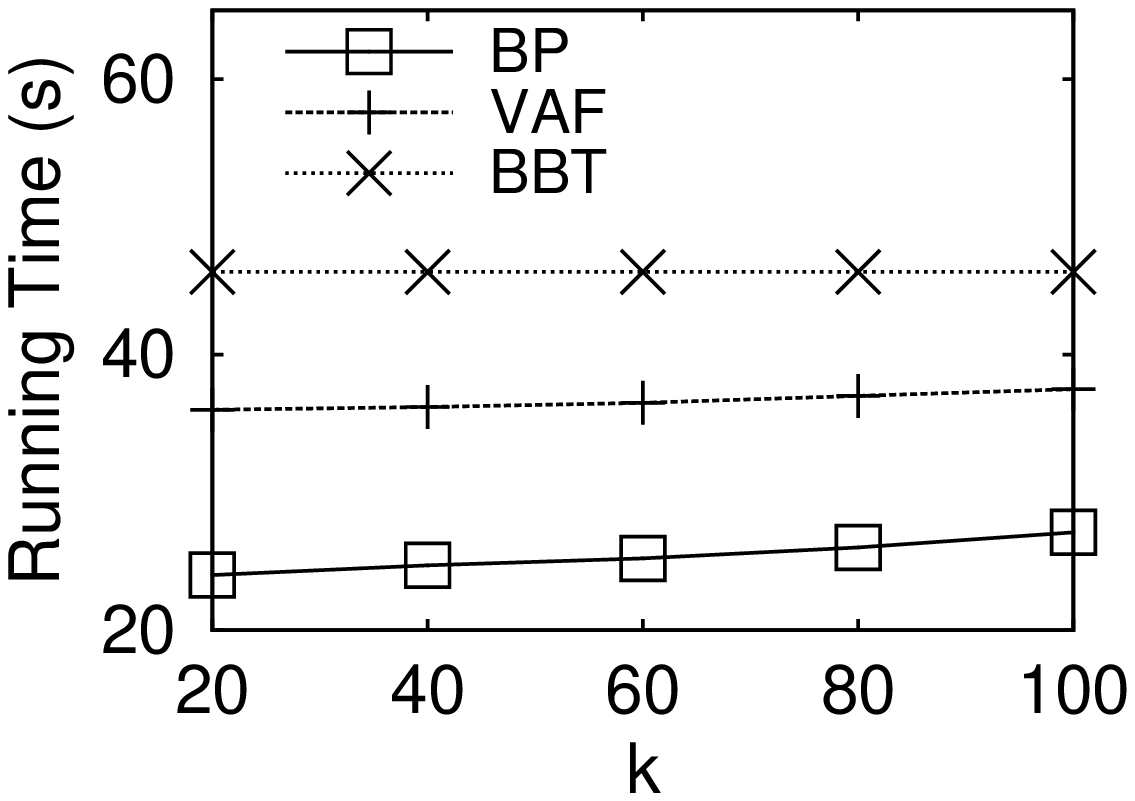}}
\hspace{-10pt}
\subfigure[Deep]{
\includegraphics[width=.25\textwidth]{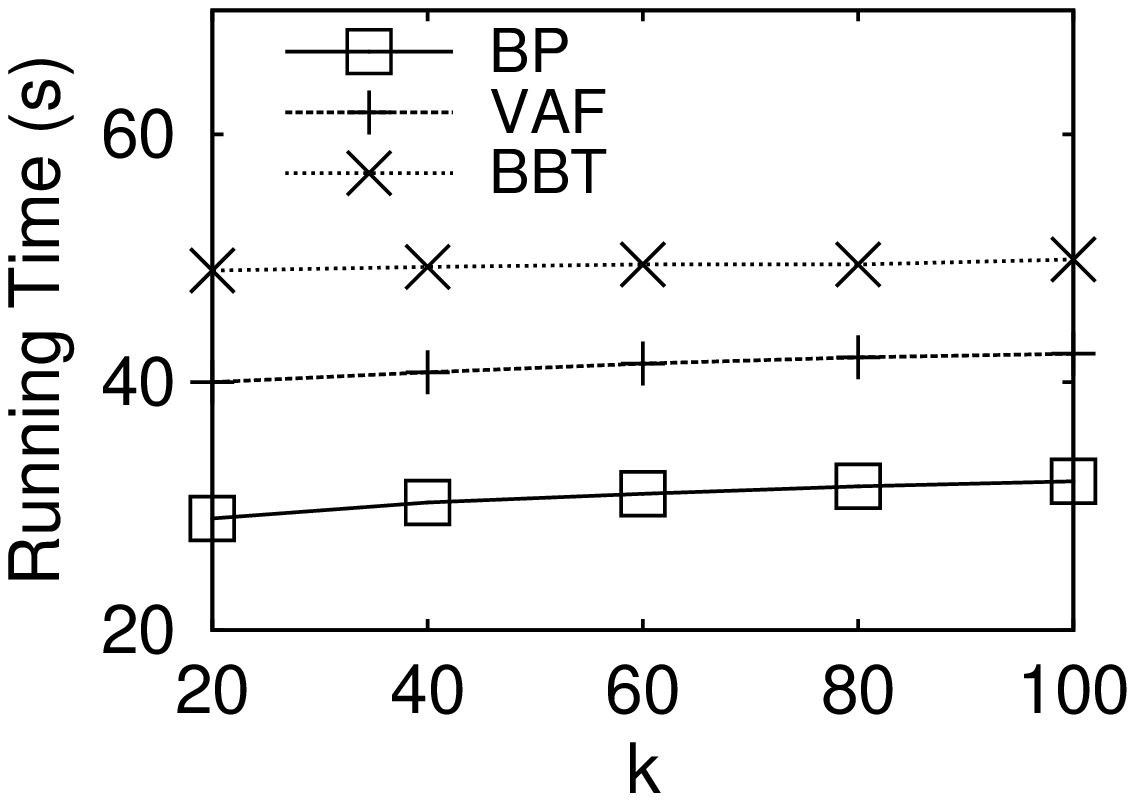}}
\hspace{-10pt}
\subfigure[Sift]{
\includegraphics[width=.25\textwidth]{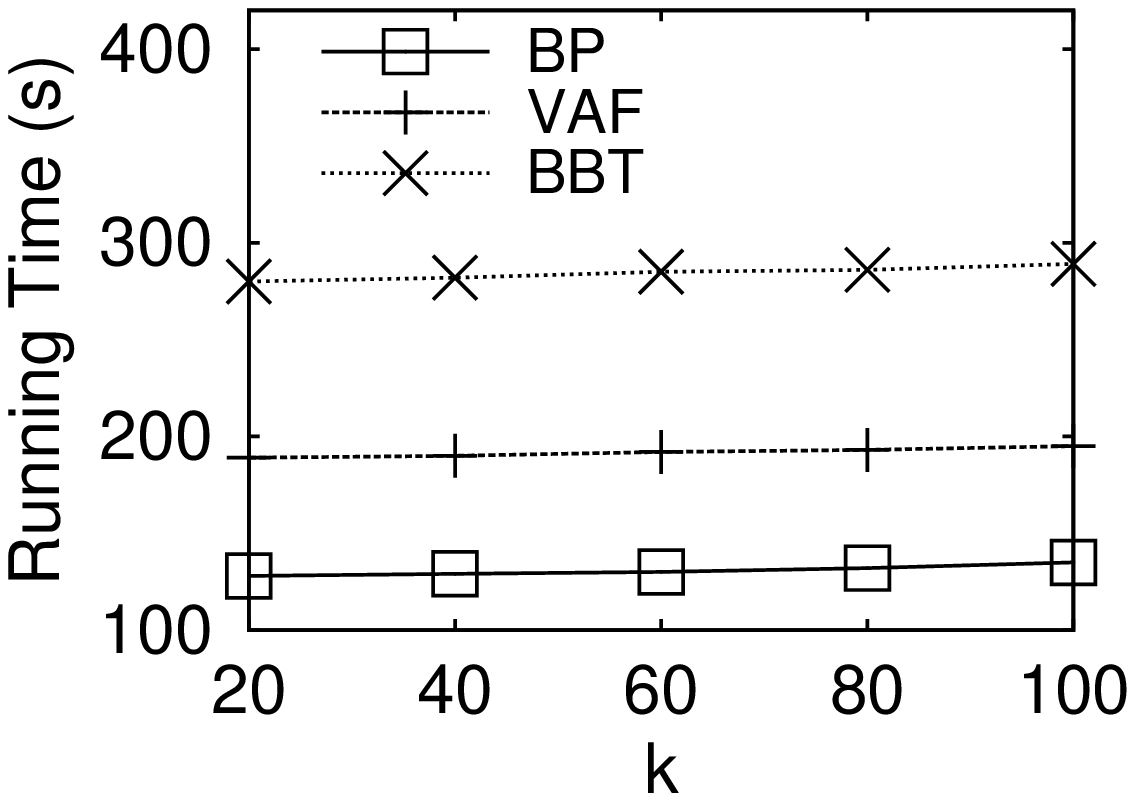}}
\caption{Running Time}
\label{Running Time}
\end{figure*}
\vspace{-10pt}
\subsection{Impact of Dimensionality}
In this section, we evaluate how the dimensionality of a dataset influences the efficiency of these methods. We select Fonts as the testing dataset and evaluate the I/O cost and the CPU's running time in different dimensions of the dataset. We vary the dimensionality from 10 to 400 and the experimental results are shown in Fig.~\ref{Impact of Dimensionality}. In BrePartition, when the dimensionality of testing changes, the number of partitions changes as well. According to Theorem~\ref{Theorem 4}, we set the computed optimized number of partitions to 3, 9, 13, 29 and 50, when the dimensionality is 10, 50, 100, 200 and 400, respectively.

The experimental results illustrate that both the I/O cost and the running time increase monotonically as the dimensionality of the dataset increases. BrePartition demonstrates the best performance as before, and the indicators show slighter increases as the dimensionality increases, which illustrates that it's applicable to various dimensional spaces. This is because BrePartition can derive appropriate bounds for different dimensional spaces. The growth rates of I/O cost and running time in VA-file increase as the dimensionality increases, which shows its poor scalability in high-dimensional space. BB-tree illustrates good performances in 10-dimensional space, but its I/O cost and running time increase significantly, which indicates it's only an efficient method in the low-dimensional space.
\vspace{-10pt}
\subsection{Impact of Data Size}
In this part, we select the dataset Sift and set the data size from 2000000 to 10000000 to test these algorithms' I/O cost and CPU's running time in different scales of datasets. The results are illustrated in Fig.~\ref{Impact of Data Size}. Based on the above Theorem~\ref{Theorem 4}, the value of data size $n$ impacts little on the value of partition's number $M$, so we consistently select 22 as the number of partitions regardless of changes in the data size.

From the experimental results, the I/O cost and the running time almost increase linearly as the size of the dataset increases. BrePartition requires the lowest I/O cost and running time indicating its good scalability with the increasing data amount. In addition, VA-file demonstrates comparable performances in I/O cost and running time as well. Compared to BrePartition, BB-tree's I/O cost and running time are multiplied since it isn't suitable for high-dimensional spaces as explained in Section~\ref{Sec I/O Cost} and~\ref{Sec Running Time}.
\begin{figure}
\hspace{-10pt}
\subfigure[I/O cost]{
\includegraphics[width=.25\textwidth]{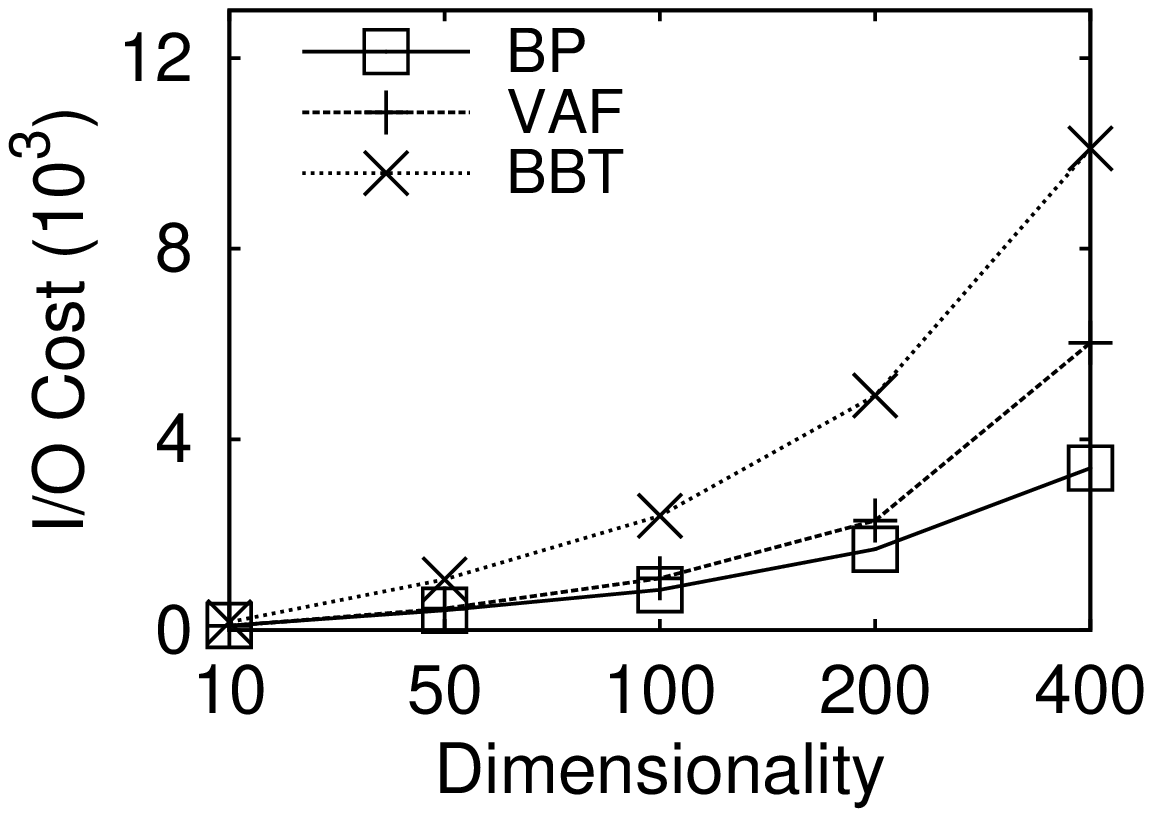}}
\hspace{-10pt}
\subfigure[Running time]{
\includegraphics[width=.25\textwidth]{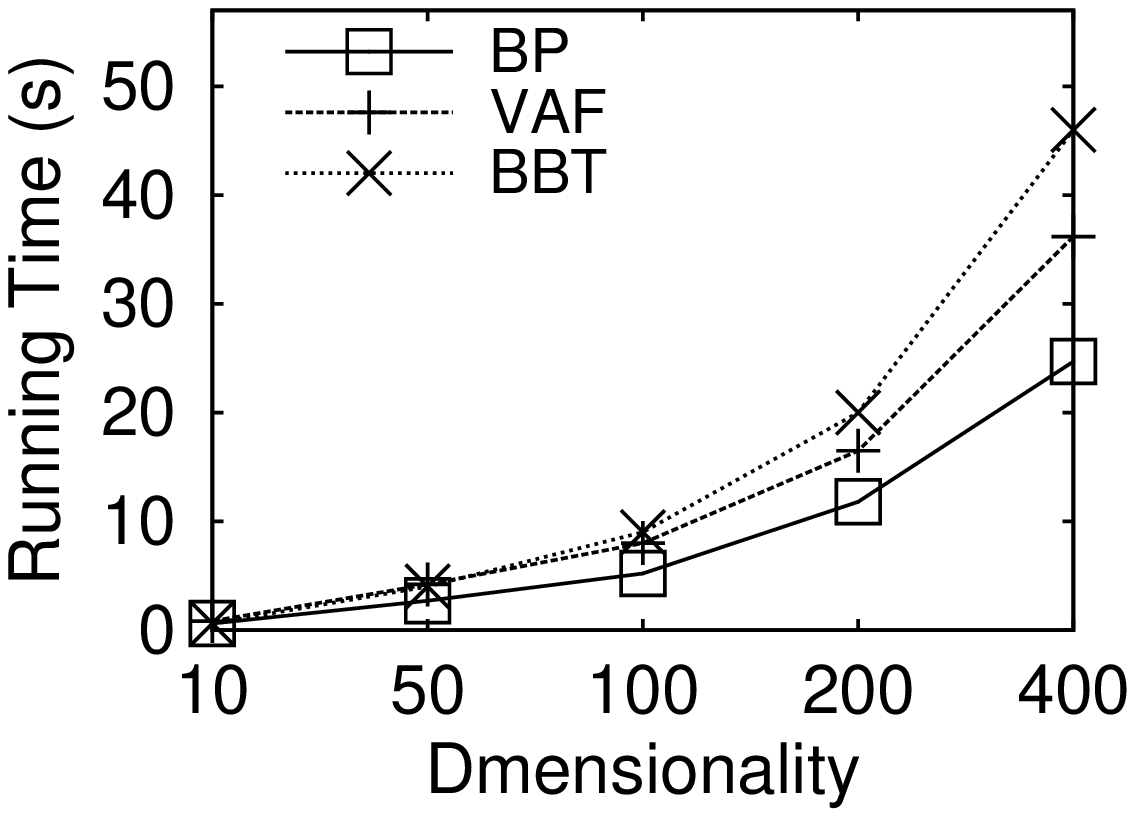}}
\caption{Impact of Dimensionality (Fonts)}
\label{Impact of Dimensionality}
\end{figure}

\begin{figure}
\hspace{-10pt}
\subfigure[I/O cost]{
\includegraphics[width=.25\textwidth]{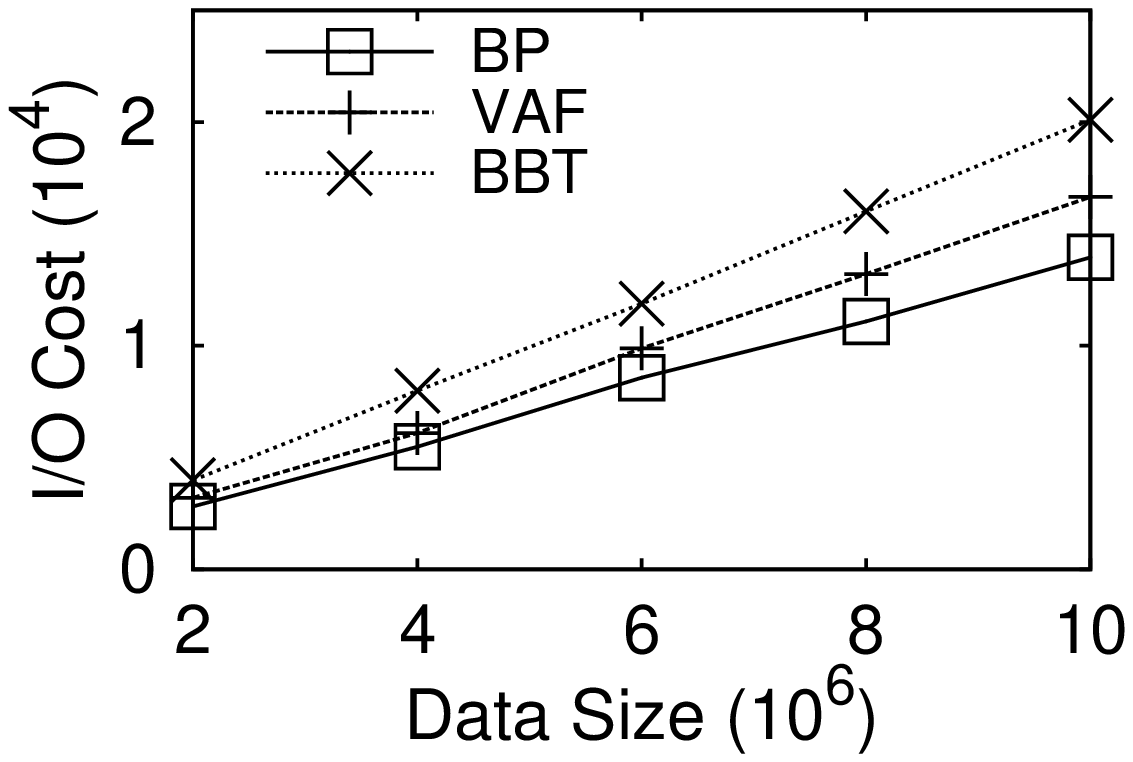}}
\hspace{-10pt}
\subfigure[Running time]{
\includegraphics[width=.25\textwidth]{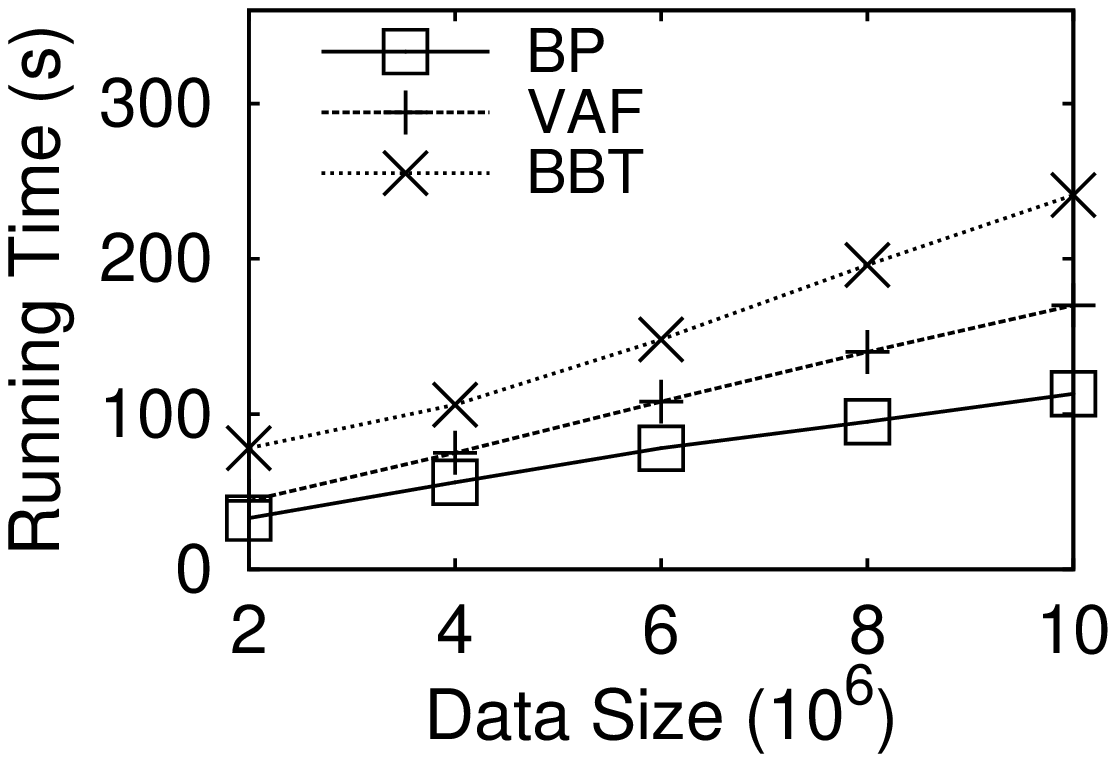}}
\caption{Impact of Data Size (Sift)}
\label{Impact of Data Size}
\end{figure}
\vspace{-10pt}
\subsection{Evaluations of Approximate Solution}\label{Approximation}
We evaluate our proposed approximate solution in this section. Our solution is compared with the state-of-the-art approximate solution proposed in~\cite{DBLP:conf/icml/CovielloMCL13} denoted as "Var" in our paper. We firstly define overall ratio denoted as $OR=\frac{1}{k}\sum_{i=1}^{k}{\frac{D_{f}(p_{i},q)}{D_{f}(p_{i}^{*},q)}}$, where $p_{i}$ is the $i$-th point returned and $p_{i}^{*}$ is the exact $i$-th NN point. Overall ratio describes the accuracy of the approximate solutions. A smaller overall ratio means a higher accuracy. Since "Var" is based on the traditional BB-tree, we also extend the memory-resident BB-tree to a disk-resident index structure following the idea of our proposed BB-forest to test the I/O cost. In addition, the running time is also tested for evaluating its efficiency. The results are shown in Fig.~\ref{EAS (Normal)}.

Fig.~\ref{EAS (Normal)}(a) shows the overall ratios by varying $k$ from 20 to 100. Generally, a larger $k$ can lead to a larger overall ratio. Moreover, we evaluate the overall ratio when the probability guarantee is set to 0.7, 0.8 and 0.9, respectively. From the experimental results, the overall ratio decreases as $p$ increases which indicates that a higher probability guarantee leads to a higher accuracy. Compared with "Var", our solution almost performs better in all cases on Normal. Similarly, we evaluate the I/O cost and the running time by varying $k$ and $p$. The experimental results are shown in Figs.~\ref{EAS (Normal)}(b) and~\ref{EAS (Normal)}(c). Generally, the I/O cost and the running time increase as $k$ increases. And we observe reverse trends from the results when varying $p$ because the searching range will be extended as $p$ increases. Since "Var" reduce the number of nodes checked in the searching process using data's distributions, the I/O cost and the running time are reduced. Even though, the I/O cost and the running time of "Var" are larger than our solution in most cases. It indicates that our approximate solution can yield the higher efficiency while ensuring the accuracy. The experimental results on Uniform are similar to those on Normal, they are shown in Section 6 in the supplementary file.
\begin{figure*}
\centering
\hspace{-10pt}
\subfigure[Overall Ratio]{
\includegraphics[width=0.25\textwidth]{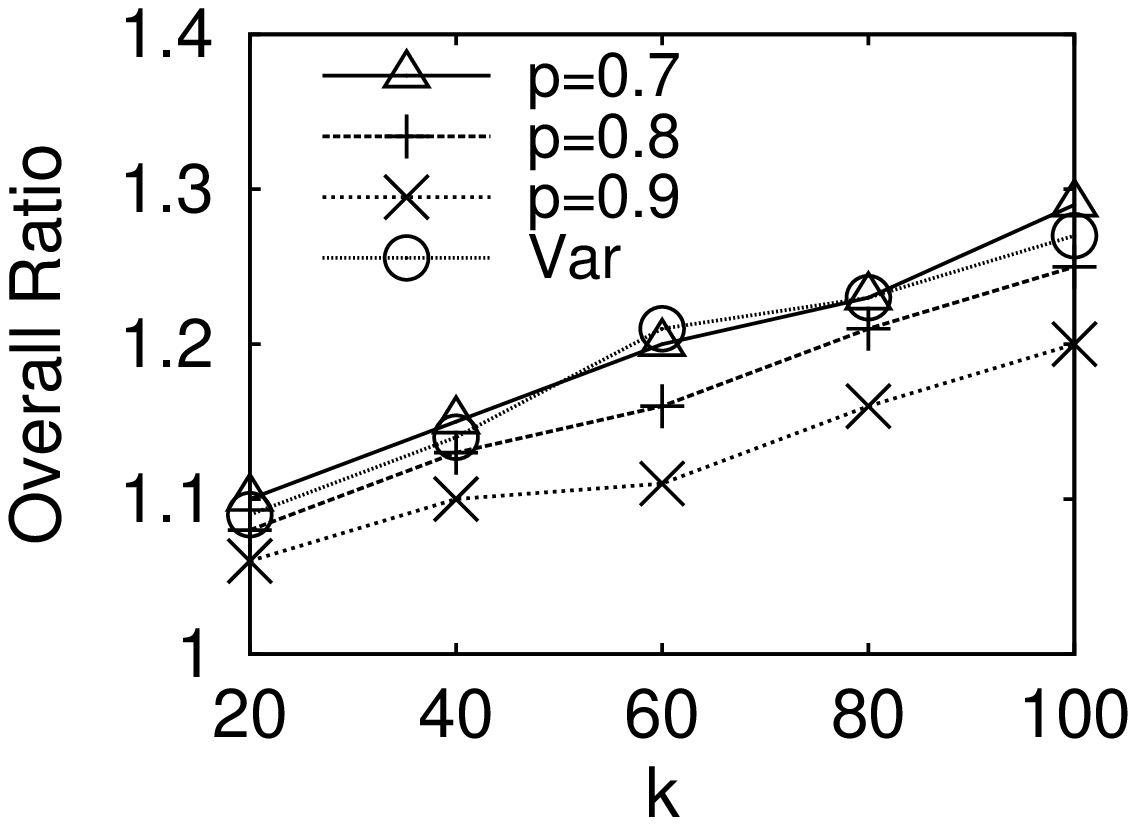}}
\hspace{-10pt}
\subfigure[I/O Cost]{
\includegraphics[width=0.25\textwidth]{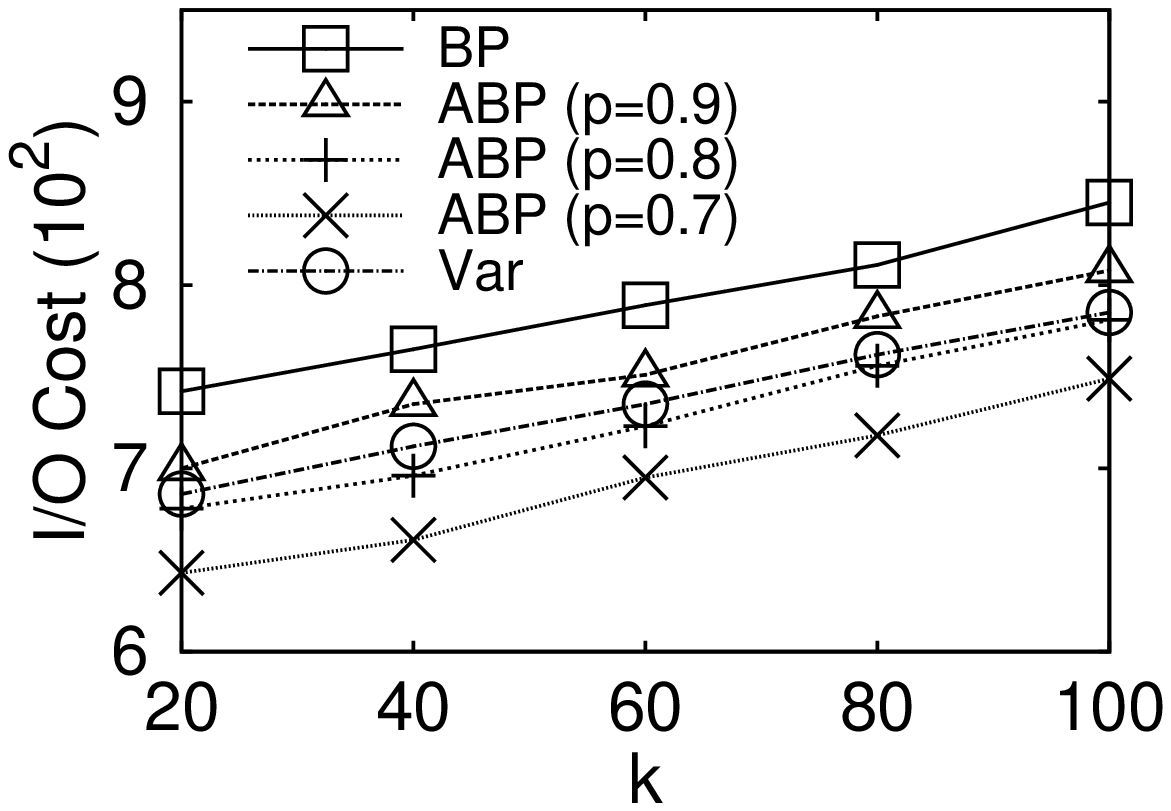}}
\hspace{-10pt}
\subfigure[Running Time]{
\includegraphics[width=0.25\textwidth]{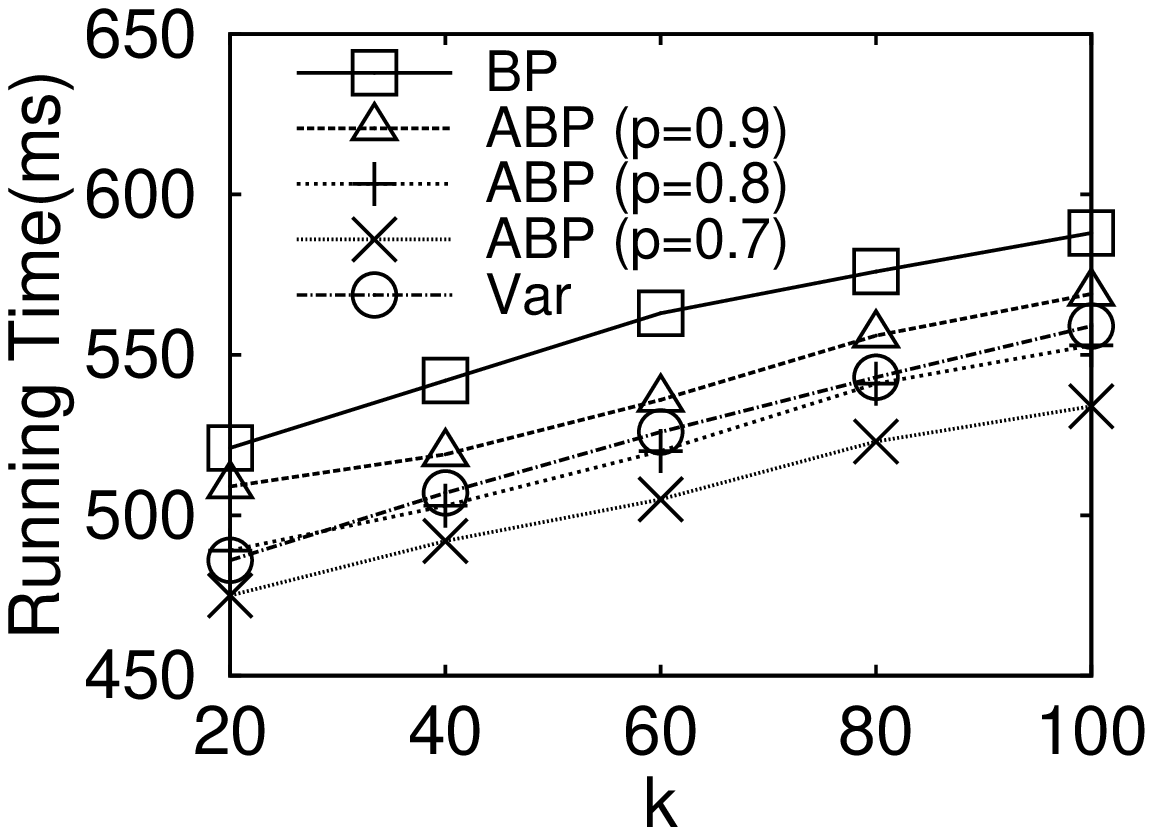}}
\caption{Evaluations of Approximate Solution (Normal)}
\label{EAS (Normal)}
\end{figure*}
\vspace{-10pt}
\section{Conclusion}\label{Conclusion}
In this paper, we address the important problem of high-dimensional \emph{k}NN search with Bregman distances and propose a dimensionality partitioning approach named BrePartition. BrePartition follows a partition-filter-refinement framework. We have proposed a number of novel techniques to overcome the challenges of the problem. First, we derive an effective upper bound based on Cauchy inequality as the pruning condition to significantly reduce the number of candidates we need to check closely. Second, we optimize the dimensionality partitioning by computing the optimized number of partitions to reduce the running time and devising a strategy called PCCP to further reduce the size of the candidate set. Third, we design an integrated index structure, named BB-forest, which consists of BB-trees for all the individual subspaces. In addition, we extend our exact solution to an approximate version via data's distribution. Experimental results demonstrate that our method can yield significant performance improvement in CPU's running time and I/O cost. In the future work, we will further improve the existing brief approximate solution and propose a more efficient solution by converting the Bregman distance into Euclidian distance and employing traditional metric searching methods to solve the high-dimensional \emph{k}NN search with Bregman distances. In addition, we will also improve our designed BB-forest so that it can support inserting or deleting large-scale data more efficiently.
\vspace{-10pt}
\ifCLASSOPTIONcompsoc
\bibliographystyle{IEEEtran}
\bibliography{reference}
\ifCLASSOPTIONcaptionsoff
  \newpage
\fi



%

%
\vspace{-40pt}
\begin{IEEEbiography}[{\includegraphics[width=1in,height=1.25in,clip,keepaspectratio]{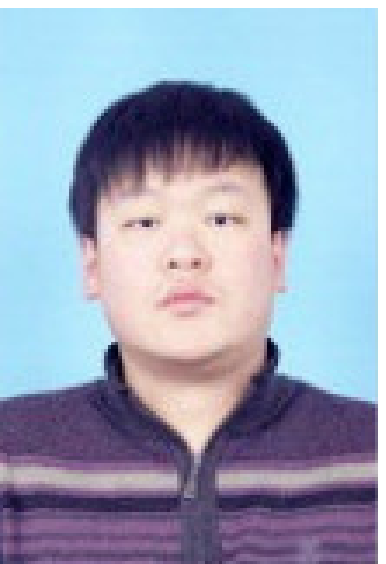}}]{Yang Song}
received the BE degree in automation from Northeastern University, China, in 2016. He is currently working toward the Ph.D degree in computer application technology at Northeastern University, China. His research interests include query processing and query optimization.
\end{IEEEbiography}
\vspace{-40pt}
\begin{IEEEbiography}[{\includegraphics[width=1in,height=1.25in,clip,keepaspectratio]{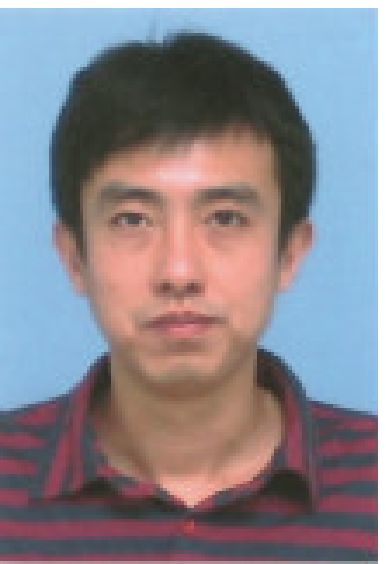}}]{Yu Gu}
received his Ph.D. degree in computer software and theory from Northeastern University, China, in 2010. Currently, he is a professor at Northeastern University, China. His current research interests include big data processing, spatial data management and graph data management. He is a senior member of China Computer Federation (CCF).
\end{IEEEbiography}
\vspace{-40pt}
\begin{IEEEbiography}[{\includegraphics[width=1in,height=1.25in,clip,keepaspectratio]{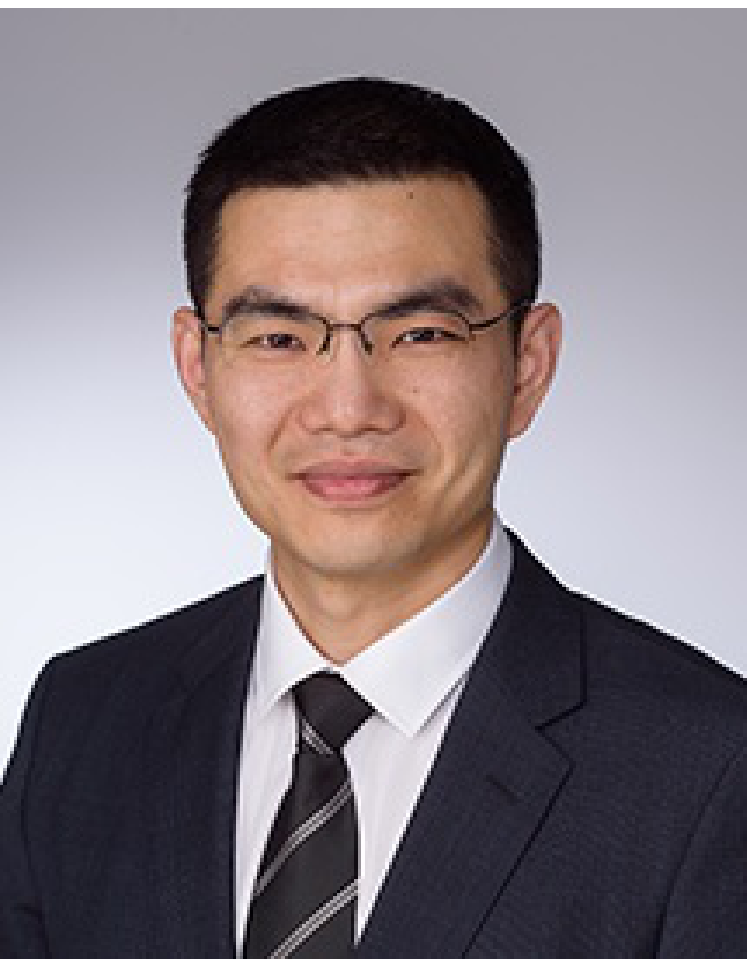}}]{Rui Zhang}
is a Professor at the School of Computing and Information Systems of the University of Melbourne. His research interests include big data and machine learning, particularly in spatial and temporal data analytics, database indexing, chatbots and recommender systems. Professor Zhang has won several awards including the Future Fellowship by the Australian Research Council in 2012, Chris Wallace Award for Outstanding Research by the Computing Research and Education Association of Australasia (CORE) in 2015, and Google Faculty Research Award in 2017.
\end{IEEEbiography}
\vspace{-40pt}
\begin{IEEEbiography}[{\includegraphics[width=1in,height=1.25in,clip,keepaspectratio]{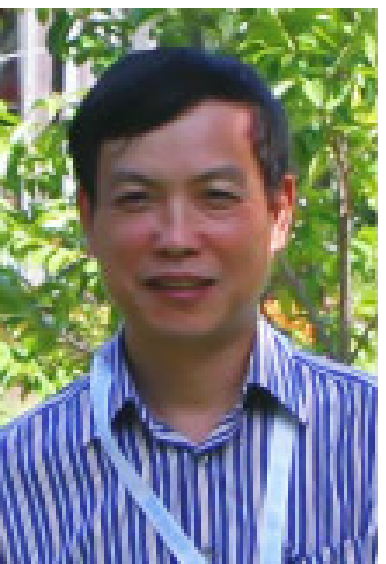}}]{Ge Yu}
received his PH.D. degree in computer science from Kyushu University of Japan in 1996. He is currently a professor at Northeastern University of China. His research interests include distributed and parallel database, OLAP and data warehousing, data integration, graph data management, etc. He has published more than 200 papers in refereed journals and conferences. He is a fellow of CCF and a member of the IEEE Computer Society, IEEE, and ACM.
\end{IEEEbiography}




\end{document}